\documentclass[journal]{IEEEtran}
\IEEEoverridecommandlockouts

\def\BibTeX{{\rm B\kern-.05em{\sc i\kern-.025em b}\kern-.08em
    T\kern-.1667em\lower.7ex\hbox{E}\kern-.125emX}}

\setlength{\textfloatsep}{1pt}
\setlength{\abovecaptionskip}{1pt} 
\setlength{\belowcaptionskip}{1pt} 
\setlength{\abovedisplayskip}{2pt}
\setlength{\belowdisplayskip}{2pt}

\setlength{\floatsep}{5pt plus 1.0pt minus 2.0pt}
\setlength{\textfloatsep}{5pt plus 1.0pt minus 2.0pt}
\setlength{\intextsep}{5pt plus 1.0pt minus 2.0pt}

\usepackage{amsmath,amsfonts,amssymb,amsthm}
\usepackage{textcomp}
\usepackage{stfloats}
\usepackage{url}
\usepackage{verbatim}
\usepackage{graphicx}    
\usepackage{epstopdf}    
\usepackage[table]{xcolor}
\usepackage{pgfplots}

\usepackage{balance}
\usepackage{mathrsfs}
\usepackage{cite}
\usepackage[bookmarks=true,colorlinks=true,allcolors=blue,allbordercolors=blue]{hyperref}
\usepackage{multirow}
\usepackage{bm}
\usepackage{comment}

\usepackage{array}
\usepackage{booktabs}
\usepackage{pifont}  

\usepackage[super]{nth} 

\newcommand{\cmark}{\ding{51}}  
\newcommand{\xmark}{\ding{55}}  

\usepackage{colortbl}
\definecolor{lightblue}{RGB}{220,230,245}

\usepackage{color}
\newcommand{\blue}[1]{{\color{black}{#1}}} 
\newcommand{\black}[1]{{\color{black}{#1}}}

\definecolor{amaranth}{rgb}{0.9, 0.17, 0.31}
\definecolor{AlirezaPurple}{RGB}{150, 0, 250}

\usepackage[font=footnotesize]{subfig}
\usepackage[font=small]{caption}
\captionsetup{belowskip=0pt}
\setlength{\belowcaptionskip}{-2pt}
\usepackage{pdfpages}

\usepackage{tikz}
\usetikzlibrary{tikzmark,decorations.pathreplacing, patterns}
\usetikzlibrary{arrows, arrows.meta}

\usepackage{rotating}

\usepackage{algorithm}
\usepackage{algorithmic}

\usepackage{fancyhdr}

\usepackage{acronym}

\usepackage{authblk}

\usepackage{soul}

\usepackage{amsthm}
\usepackage{babel}

\theoremstyle{plain}

\newtheorem{prop}{\protect\propname}

\theoremstyle{remark}

\providecommand{\theoremname}{Theorem}
\providecommand{\propname}{Proposition}
\providecommand{\remarkname}{Remark}

\providecommand{\corollaryname}{Corollary}

\usepackage[super]{nth}

\usepackage{acronym}

\acrodef{2d}[2D]{bi-dimensional}
\acrodef{5g}[5G]{fifth-generation}
\acrodef{6g}[6G]{sixth-generation}
\acrodef{aoa}[AoA]{angle of arrival}
\acrodef{ao}[AO]{alternating optimization}
\acrodef{aod}[AoD]{angle of departure}
\acrodef{bs}[BS]{base station}
\acrodef{csi}[CSI]{channel state information}
\acrodef{cp}[CP]{canonical polyadic}
\acrodef{cfar}[CFAR]{constant false alarm rate}
\acrodef{ca-cfar}[CA-CFAR]{cell averaging \ac{cfar}}
\acrodef{ccrb}[CCRB]{constrained Cramér-Rao bound}
\acrodef{crb}[CRB]{Cramér-Rao bound}
\acrodef{dais}[DAIS]{delay-angle information spoofing}
\acrodef{dft}[DFT]{discrete Fourier transform}
\acrodef{esprit}[ESPRIT]{estimation of signal parameters via rotational invariance techniques}
\acrodef{fft}[FFT]{fast Fourier transform}
\acrodef{fim}[FIM]{Fisher information matrix}
\acrodef{fr}[FR]{frequency range}

\acrodef{gnss}[GNSS]{global navigation satellite system}
\acrodef{gospa}[GOSPA]{generalized optimal sub-pattern assignment}
\acrodef{hd}[HD]{high-definition}
\acrodef{los}[LoS]{line-of-sight}
\acrodef{ls}[LS]{least-squares}

\acrodef{mcrb}[MCRB]{mismatched \ac{crb}}
\acrodef{mmWave}[mmWave]{Millimeter-wave}
\acrodef{mimo}[MIMO]{multiple-input multiple-output}
\acrodef{miso}[MISO]{multiple-input single-output}
\acrodef{mpc}[MPC]{multipath component}
\acrodef{ml}[ML]{maximum likelihood}
\acrodef{mf}[MF]{matched-filter}
\acrodef{nlos}[NLoS]{non-line-of-sight}
\acrodef{nmse}[NMSE]{normalized mean square error}
\acrodef{ofdm}[OFDM]{orthogonal frequency division multiplexing}
\acrodef{omp}[OMP]{orthogonal matching pursuit}
\acrodef{peb}[PEB]{position error bound}
\acrodef{pla}[PLA]{physical-layer authentication}
\acrodef{poam}[POAM]{penalized optimal assignment metric}
\acrodef{prmse}[PRMSE]{penalized root mean squared error}
\acrodef{prs}[PRS]{positioning reference signal}
\acrodef{roi}[ROI]{region of interest}
\acrodef{rf}[RF]{radio frequency}
\acrodef{rfc}[RFC]{radio frequency chain}
\acrodef{rcs}[RCS]{radar cross section}
\acrodef{ris}[RIS]{reflective intelligent surface} 
\acrodef{rmse}[RMSE]{root mean square error}
\acrodef{simo}[SIMO]{single-input multiple-output}
\acrodef{siso}[SISO]{single-input single-output}
\acrodef{snr}[SNR]{signal-to-noise ratio}
\acrodef{sota}[SoTA]{state-of-the-art}
\acrodef{sp}[SP]{scatter point}
\acrodef{srs}[SRS]{sounding reference signal}
\acrodef{svd}[SVD]{singular value decomposition}
\acrodef{tdoa}[TDoA]{time difference of arrival}
\acrodef{toa}[ToA]{time of arrival}
\acrodef{qcqp}[QCQP]{quadratically constrained quadratic program}

\acrodef{ue}[UE]{user equipment} 
\acrodef{ula}[ULA]{uniform linear array}
\acrodef{ura}[URA]{uniform rectangular array}
\acrodef{va}[VA]{virtual anchor}

\begin{document}
\bstctlcite{BSTcontrol}

\title{HoloTrace: a Location Privacy-\blue{Preserving Framework} for mmWave MIMO-OFDM Systems}

\author{
Lorenzo~Italiano,~\IEEEmembership{Graduate Student Member,~IEEE},
Alireza~Pourafzal,~\IEEEmembership{Member,~IEEE},
Hui~Chen,~\IEEEmembership{Member,~IEEE},
Mattia~Brambilla,~\IEEEmembership{Senior Member,~IEEE},
Gonzalo~Seco-Granados,~\IEEEmembership{Fellow,~IEEE},
Monica~Nicoli,~\IEEEmembership{Senior Member,~IEEE}, 
and~Henk~Wymeersch,~\IEEEmembership{Fellow,~IEEE}
\thanks{This work was supported, in part, by the European Union—NextGenerationEU under the National Sustainable Mobility Center under Grant CN00000023, Italian Ministry of University and Research (MUR) Decree n. 352–09/04/2022, the Spanish Agency of Research (AEI) under grant PID2023-152820OB-I00 funded by MICIU/AEI/10.13039/501100011033, the AGAUR-ICREA Academia Program, VR project 2023-03821, and Chalmers Transport Area of Advance. \emph{(Corresponding author: Alireza Pourafzal.)}}
\thanks{
L. Italiano and M. Brambilla are with the Dipartimento di Elettronica, Informazione e Bioingegneria, Politecnico di Milano, 20133 Milan, Italy (email: \{lorenzo.italiano, mattia.brambilla\}@polimi.it).
}
\thanks{
A. Pourafzal, H. Chen, and H. Wymeersch are with the Department of Electrical Engineering, Chalmers University of Technology, 41296 G\"oteborg, Sweden (email: \{alireza.pourafzal, hui.chen, henkw\}@chalmers.se).
}
\thanks{
G. Seco-Granados is with the Department of Telecommunication Engineering, Autonomous University of Barcelona, 08193 Barcelona, Spain (email: gonzalo.seco@uab.cat).
}
\thanks{
M. Nicoli is with the Dipartimento di Ingegneria Gestionale, Politecnico di Milano, 20133 Milan, Italy (email: monica.nicoli@polimi.it).
}
\vspace{-20pt}
}

\maketitle
\begin{abstract}
\blue{High-resolution localization in beyond-5G cellular systems creates new opportunities for location-based services, but also enables \acp{bs} to infer user location from routine pilot transmissions, raising significant privacy concerns. This paper introduces HoloTrace, a signal-level location-privacy framework for single-\ac{bs} localization based on mmWave \ac{mimo}-\ac{ofdm} communication. HoloTrace modifies user-side pilot symbols while keeping the analog precoder fixed to bias localization-relevant channel features, including \ac{aoa}, \ac{aod}, and \ac{tdoa}, toward a prescribed fictitious geometry.}
\blue{We formulate pilot-level spoofing as a subspace projection-based signal design problem and derive closed-form oracle designs assuming \ac{ue}-side channel-gain knowledge. We further develop blind designs that operate without complex path-gain information, while relying on geometric side information, and explicitly characterize their limitations in multipath scenarios. Simulation results with a two-path single-anchor geometry show that oracle HoloTrace can steer the inferred position toward a desired spoofed location, whereas blind variants mainly provide obfuscation or approximate spoofing due to angle-delay pairing ambiguities. Notably, the communication impact is not fixed by the proposed method, but depends on the selected spoofed geometry. An appropriate spoofed location choice can preserve channel-rate quality while achieving effective location privacy.}
\end{abstract}

\begin{IEEEkeywords}
6G localization, location privacy, signal spoofing, single-anchor positioning, trustworthiness. 
\end{IEEEkeywords}

\acresetall 
 
\vspace{-3mm}
\section{Introduction}
\vspace{-3mm}
\subsection{Motivation}
Next-generation wireless networks will feature high-precision user localization as a built-in service~\cite{yang2024positioning}. 
Accurate localization is enabled by the sparsity of the wireless channel in the newly targeted \acp{fr} for beyond \ac{5g} and \ac{6g} systems, i.e., \ac{fr}3 (7.125--24.25 GHz), \ac{fr}2 (24.25--52.6 GHz), and sub-THz (100--300 GHz), combined with the high delay and angle resolution brought by the large available bandwidth and large antenna arrays, respectively~\cite{italiano2025tutorial, behravan2022positioning}. 
\blue{While location-based services enrich user experience with personalized contextual information and enable a wide range of novel applications, they also raise privacy concerns, as users may be tracked without their consent~\cite{pasandi2024location}. 
A critical case arises when the \ac{bs} infers the \ac{ue} location opportunistically from routine uplink pilot transmissions, originally intended for communication and channel estimation, without an explicit positioning request. 
Within this line of physical-layer privacy research, this paper proposes a signal-level framework where the \ac{ue} modifies its transmitted pilots to bias the localization-relevant channel features estimated by the \ac{bs}: in this context, obfuscation degrades or prevents localization, whereas spoofing induces a fictitious location estimate.
Although not the main focus of this paper, spoofing may also apply to the sensing of passive targets~\cite{Bartoletti:WCL2025,yildirim2025ofdm}. }
\vspace{-8pt}

\subsection{Related Work \blue{on Physical-Layer Location Privacy}}
\label{sec:related work}
Early work in physical-layer \blue{privacy-preserving localization} includes \cite{checa2020location}, which pioneered a \blue{location-privacy} beamforming technique that conceals the \acp{aod} of propagation paths, thereby preventing the \ac{bs} from inferring the \ac{ue}’s direction, at a rate penalty. \blue{A similar approach was considered in~\cite{zhang_privacy_2025} via \ac{crb} optimization, degrading the localization performance of unauthorized \ac{bs} while leaving the legitimate \acp{bs}  unaffected.} The tradeoff between performance and privacy was also explored in~\cite{ayyalasomayajula_users_2023}, which showcased a WiFi-based prototype setup that allows users to obfuscate their location without compromising throughput, adding a delay on the shortest path. 
A key limitation of these approaches is their reliance on accurate \ac{csi} knowledge (either with feedback or leveraging channel reciprocity). An alternative is to inject artificial noise into the transmitted signal to degrade any unauthorized localization attempts~\cite{tomasin2022beamforming}. 
\blue{Although this approach has the advantage of being \ac{csi}-free, it can only obfuscate the \ac{ue} location and cannot induce (i.e., spoof) a desired fictitious location estimate.}
An intermediate approach, termed fake path injection~\cite{li2024channel}, avoids the need for \ac{csi} knowledge and harnesses the geometric structure of the mmWave channel to generate additional non-physical paths at the unauthorized \ac{bs}. 
The approach was extended in~\cite{li2025delay} with \ac{dais}, creating fake yet geometrically reasonable paths, thereby enabling the \ac{ue} to spoof its location. 
\blue{A legitimate \ac{bs} can recover the true location through a small secret mapping, whereas an adversary estimates an incorrect one.}

\begin{table*}[t]
\centering
\caption{Comparison of physical-layer location privacy approaches. [S\,=\,Spoofing, O\,=\,Obfuscation].}
\renewcommand{\arraystretch}{1.1}
\begin{tabular}{
|l|c|c|c|c|c|c|c|c|
}
\hline
\textbf{Ref.} & 
\textbf{Approach} & 
\textbf{Angle privacy} & 
\textbf{Delay privacy} & 
\textbf{Single-BS} & 
\textbf{\blue{UE-BS sync.}} & 
\textbf{CSI-free} & 
\textbf{Beamforming} &
\textbf{Domain}\\
\hline
\cite{checa2020location} & O & \cmark & \xmark & \cmark~(MIMO) & \cmark & \xmark &  Digital & Spatial \\
\cite{zhang_privacy_2025} & O & \cmark & \xmark & \cmark~(MIMO) & \cmark & \xmark &  Analog & Spatial \\
\cite{ayyalasomayajula_users_2023} & S & \xmark & \cmark & \cmark~(MIMO) & \xmark & \xmark &  Digital & Spatial\\
\cite{tomasin2022beamforming} & O & \cmark & \xmark & \cmark~(MIMO) & \xmark & \xmark &  Digital & Spatial \\
\cite{li2024channel} & O & \cmark & \cmark & \cmark~(MISO) & \cmark & \cmark &  Digital & Spatial, Spectral\\
\cite{li2025delay} & S & \cmark & \cmark & \cmark~(MISO) & \cmark & \cmark &  Digital  & Spatial, Spectral\\
\cite{pham2025leveraging, srinivasan2024aoa, pourafzal2025cooperative} & S & \cmark & \xmark & \xmark~(AoA-only) & \cmark & Some cases &  All & Temporal\\
\cite{gao_your_2023,Lv:JSAC2024} & O & \xmark & \cmark & \xmark & \cmark & \cmark &  N/A& Temporal, Spectral \\
\rowcolor{lightblue}
\textbf{This paper} & S & \cmark & \cmark & \cmark~(MIMO) & \xmark & Some cases &  Analog & Temporal, Spectral\\
\hline
\end{tabular}
\label{tab:privacy-comparison}
\vspace{-10pt}
\end{table*}

\blue{Physical-layer location-spoofing techniques have a dual interpretation in the context of \ac{pla}. While location-spoofing can serve as a privacy-preserving tool to mislead unauthorized entities, a device's radio channel response, or spatial signature, is also commonly used for fingerprint-based identity verification. Hence, if a device can spoof its location, it may circumvent location-based authentication mechanisms.}
For instance,~\cite{pham2025leveraging} investigated the feasibility of spoofing the \ac{aoa}, showing that such attacks are ineffective when the \ac{bs} is equipped with a digital array. 
By contrast, successful spoofing has been demonstrated under analog arrays~\cite{srinivasan2024aoa} and hybrid arrays~\cite{pourafzal2025cooperative}. 
Spoofing is not limited to angles; it can also affect \ac{toa} measurements. This was demonstrated in~\cite{gao_your_2023,Lv:JSAC2024}, which exploited the fact that \acp{prs} are not authenticated, providing the opportunity for an attacker to shift the location estimation of a legitimate \ac{ue}.
Interestingly, many of the above works focused on the problem of privacy preservation for \textit{single-\ac{bs} localization}. This is well-motivated, since at higher frequencies users will generally be in \ac{los} to a single serving \ac{bs} for high-rate communication purposes. Localization is then enabled by either a two-way protocol~\cite{sun2021comparative, abu2020locbound} or via multipath exploitation~\cite{ge2024experimental,kakkavas2021reflective}, without requiring multi-\ac{bs} connectivity. This makes the single-\ac{bs} localization scenario both practically relevant and potentially vulnerable from a privacy perspective.

Table~\ref{tab:privacy-comparison} summarizes the key characteristics of recent works on physical-layer location privacy. The table reveals that existing studies do not fully address spoofing in \textit{realistic single-\ac{bs} positioning scenarios}, despite their increasing relevance in practice~\cite{ge2024experimental}. Specifically, most prior works overlook the combined presence of multiple antennas at both the \ac{ue} and \ac{bs}, unsynchronized \acp{ue}, and analog array architectures. 
{These omissions are not marginal. For instance, assuming a single-antenna \ac{bs} drastically simplifies the problem, making location spoofing easier and deviating from realistic deployment conditions.}
Similarly, models based on perfectly synchronized \acp{ue} often suggest spoofing strategies such as signal delays that do not meaningfully impact single-anchor (i.e., inter-path) \ac{tdoa} localization \cite{SA-JURE, ge2024experimental}. 
Finally, analog arrays, especially at the \ac{ue} side, are the predominant hardware implementation at higher frequencies \cite{ge2024experimental}, and their constraints significantly affect spoofing design. 
\blue{As highlighted in Table~\ref{tab:privacy-comparison}, this paper addresses this gap by proposing a pilot-level spoofing framework for single-\ac{bs} localization that explicitly accounts for analog-array and asynchronous-\ac{ue} settings.}

From Table \ref{tab:privacy-comparison}, a noteworthy pattern emerges: many existing approaches rely on manipulating the spatial-domain signals (via precoders or combiners) \cite{checa2020location,zhang_privacy_2025,ayyalasomayajula_users_2023,tomasin2022beamforming}, possibly in combination with the spectral pilots \cite{li2024channel,li2025delay}. Only a subset of the prior work manipulates the temporal pilots 
\cite{pham2025leveraging, srinivasan2024aoa, pourafzal2025cooperative} in a narrowband system, or the time-frequency pattern in an OFDM system \cite{gao_your_2023,Lv:JSAC2024}.
Our approach does not require changes to the spatial-domain beamformer, \blue{making it compatible with analog arrays whose precoders are fixed or only slowly reconfigured.} While precoder-based attacks can be theoretically more powerful, their practical feasibility is limited in analog beamforming systems, where the precoder is typically implemented through fixed phase shifters, and it is difficult to reconfigure in real time. Moreover, in real-world deployments, the precoder is often tightly integrated with the rest of the hardware and beam management procedures, further limiting an attacker's ability to manipulate it dynamically. On the other hand, modifying the pilot signal offers a more flexible and implementable strategy, especially in scenarios where the attacker has access to the signal generation stage but not to the physical-layer beamforming hardware. 
\blue{As a result, the achievable spoofing performance is more constrained than with full precoder control and depends on available channel knowledge and propagation geometry.}

\vspace{-2mm}
\subsection{Contributions}
\blue{This paper develops a physical-layer framework for location-privacy in single-\ac{bs} mmWave \ac{mimo}-\ac{ofdm} links. We consider a realistic positioning scenario in which the \ac{ue} is not synchronized to the \ac{bs}, and both devices are equipped with analog antenna arrays.
Our key insight is that, despite these constraints, a \textit{single perturbation block}, embedded in the \ac{ue}'s pilot generator, can bias the geometric inversion procedure employed by the \ac{bs} for positioning, while leaving the nominal precoder unchanged. We refer to this solution as \textit{HoloTrace}, short for \emph{holographic trace}, as it operates by slightly reshaping the geometric imprint (or \emph{trace}) left by the transmitted waveform at the physical layer, thereby creating a fictitious (\emph{holographic}) image. Our results show that waveform-level pilot modification can provide location privacy in the evaluated single-\ac{bs} setting; the achievable spoofing accuracy and communication-rate loss depend on the available \ac{csi} and the selected spoofed geometry.}
The main contributions of this work are summarized as follows:

\begin{itemize}
\item \blue{\textbf{Pilot-level spoofing framework for realistic single-BS localization:} 
We introduce a unified framework for physical-layer pilot spoofing targeting location estimation in single-\ac{bs} setups. The framework jointly manipulates key signal features (\ac{aoa}, \ac{aod}, and \ac{tdoa}) and is applicable across a broad range of localization architectures, including uplink and downlink, \ac{simo}, \ac{miso}, and \ac{mimo} \ac{ofdm} systems. 
The proposed framework operates at the physical-signal level;  protocol-stack aspects are beyond the scope of this work.}
\blue{In a 5G NR context, the perturbation is applied to uplink positioning pilots (e.g., \acp{srs}), which occupy dedicated time-frequency resources distinct from those used for data transmission 
(physical uplink shared channel and its associated demodulation pilots).} 

\item \textbf{Unified spoofing approach with \ac{csi}:}
\blue{Assuming channel-gain knowledge at the \ac{ue}, we derive closed-form HoloTrace pilot designs that exactly reproduce the desired noiseless received signal corresponding to prescribed \ac{aoa}, \ac{aod}, \ac{toa}, and their combinations. This oracle setting provides an upper bound on spoofing capability.} 


\item \textbf{Unified spoofing approach without \ac{csi}:}
\blue{We propose blind spoofing methods that require no channel-gain knowledge, while relying on geometrical information. We derive approximate Hadamard- and Kronecker-based pilot designs and analyze their limitations, including power ambiguity and path association loss. For angular parameters, we provide a unified blind design, while for \ac{tdoa} we introduce a novel fake-path injection technique.}


\item \blue{\textbf{Comprehensive theoretical and numerical evaluation:}
We assess the spoofing and localization performance of HoloTrace under varying transmit power and uncertainties, and compare against a representative shift-based spoofing strategy. Results show that oracle HoloTrace can closely realize the target spoofed location in the considered two-path scenario, while blind variants induce obfuscation or approximate spoofing depending on path association. Communication-rate loss is geometry-dependent, and spoofed positions can be selected to minimize rate degradation relative to the true position.}
\end{itemize}

The paper is organized as follows.
Section~\ref{sec:sys_model} introduces the system model and localization framework, describing scenario, channel, signal formulation, positioning procedure, and communication link.
Section~\ref{sec:holotrace} outlines the proposed HoloTrace spoofing strategy, covering both the coherent (oracle) and non-coherent (blind) regimes.
Section~\ref{sec:results} describes the simulation setup and analyzes the results.
Lastly, Section~\ref{sec:conclusion} concludes the paper by summarizing key findings, discussing limitations, and outlining directions for future work.

\paragraph*{Notation}
Matrices are defined with bold uppercase (e.g., $\mathbf{A}$), vectors with bold lowercase (e.g., $\mathbf{a}$), and tensors with calligraphic (e.g., $\mathcal{A}$). Element indices are indicated with lowercase superscripts (e.g., $\mathbf{a}^i$). Operations include conjugate~$(\cdot)^\ast$, transpose~$(\cdot)^\mathsf{T}$, conjugate transpose~$(\cdot)^\mathsf{H}$, Moore-Penrose inverse~$(\cdot)^\dagger$, Kronecker product~$(\otimes)$, Hadamard product~$(\odot)$, Hadamard division~$(\oslash)$, Khatri-Rao product~$(\circledast)$, and mode-$i$ tensor product~$(\times_i^{})$ between the $i$-th dimension of the first tensor and the \nth{2} dimension of the second one. The diag$(\cdot)$ operator is used to create a diagonal matrix from a vector (e.g., diag$(\mathbf{a})$), \black{$\mathcal{R}(\cdot)$ is the range of a matrix}, vec$(\cdot)$ indicates the vectorization function, $\left[\cdot\right]_j^{}$ denotes the tensor unfolding operation over the $j$-th dimension, $\Vert \cdot\Vert$ defines the norm-2 \blue{for vectors and the Frobenius norm for matrices or tensors}, $\mathbf{I}_M^{}\in\mathbb{C}^{M\times M}$ is the identity matrix, and $\mathbf{0}_{M\times N}^{}\in\mathbb{C}^{M\times N}$ and $\mathbf{1}_{M\times N}^{}\in\mathbb{C}^{M\times N}$ are all-zero and all-one matrices, respectively. \black{$\mathbf{P}^{}_{\mathbf{A}}=\mathbf{A}\mathbf{A}^\dagger$ is the projector onto the column space of $\mathbf{A}$, and $\mathbf{P}^{\perp}_{\mathbf{A}}=\mathbf{I}-\mathbf{P}^{}_{\mathbf{A}}$ is the corresponding orthogonal projector.}

\section{System Model}
\label{sec:sys_model}
We consider a $2$D uplink scenario where a \blue{fixed} single \ac{bs} at known position $\mathbf{p}_\mathrm{BS}^{} \in \mathbb{R}^2$ and orientation $o_\mathrm{BS}^{}$ receives signals from a \blue{quasi-static} \ac{ue} with unknown position $\mathbf{p}_\mathrm{UE}^{}$ and orientation $o_\mathrm{UE}^{}$, in a propagation environment with $L$ paths: one \ac{los} path ($\ell=0$) and $L-1$ single-bounce \ac{nlos} paths reflected by distinct \acp{sp} at positions $\{\mathbf{p}_\ell^{}\}_{\ell=1}^{L-1}$ (see~Fig.\,\ref{fig:overview}). The \ac{ue} is asynchronous with the \ac{bs}, i.e., a clock offset $b_{\text{UE}}^{}$ is present.

\input{figures/scenario}

\subsection{Channel and Signal Model}
The \ac{mimo}-\ac{ofdm} channel at subcarrier $k \in \{0,\dots,K-1\}$ is
\begin{equation}
    \mathbf{H}_k^{} = \sum_{\ell=0}^{L-1} \alpha_\ell^{}\, e^{-j2\pi k\Delta f \tau_\ell^{}}\,
    \mathbf{a}_{N_\mathrm{R}^{}}^{}(\theta_\ell^{})\, \mathbf{a}_{N_\mathrm{T}^{}}^\mathsf{T}(\phi_\ell^{}),
    \label{eq:channel}
\end{equation}
where $\alpha_\ell^{} \in \mathbb{C}$ is the complex path gain, \black{$\Delta f$ is the subcarrier spacing,} $\tau_\ell^{}$ is the path delay, $\theta_\ell^{}$ and $\phi_\ell^{}$ are the \ac{aoa}/\ac{aod}, $\mathbf{a}_{N_{\mathrm{R}^{}}}^{}(\theta)$ and $\mathbf{a}_{N_{\mathrm{T}^{}}}^{}(\phi)$ are the normalized \ac{ula} responses, with $N_\mathrm{T}^{}$ and $N_\mathrm{R}^{}$ indicating the number of antenna elements at the \ac{ue} and \ac{bs} sides, respectively. For a generic \ac{ula} of size $N$, the array response at angle $\vartheta$ is \blue{$\mathbf{a}_N^{}(\vartheta)^{} = \frac{1}{\sqrt{N}}[1 \; e^{j\pi \sin\vartheta} \; \cdots \; e^{j\pi(N-1)\sin\vartheta}]^\mathsf{T}$}.
The \ac{los} measurements $\{\tau_0^{},\theta_0^{},\phi_0^{}\}$ are related to the \ac{ue} state by 
\begin{align}
\tau_0^{} & =    \Vert  \mathbf{p}_\mathrm{BS}^{}-\mathbf{p}_\mathrm{UE}^{}\Vert /c + b_{\text{UE}}^{},\\
\theta_0^{} & = \blue{\text{atan2}(y_{\text{UE}}^{} - y_{\text{BS}}^{},\; x_{\text{UE}}^{} - x_{\text{BS}}^{}) - o_{\text{BS}}^{}},\\
\phi_0^{} & = \text{atan2}(y_{\text{BS}}^{} - y_{\text{UE}}^{},\; x_{\text{BS}}^{} - x_{\text{UE}}^{}) - o_{\text{UE}}^{},
\end{align}
\black{where $c$ denotes the speed of light.} It is important to note that $b_{\text{UE}}^{}$ and $o_{\text{UE}}^{}$ are unknowns in the positioning problem. 

The transmit pilot at subcarrier $k$, \ac{ofdm} symbol $m$, and block $s$ is $x_{m,s,k}^{}~\in~\mathbb{C}$, with $|x_{m,s,k}^{}|=1$, for $m \in \{0,\dots,M-1\}$ and $s~\in~\{0,\dots,S-1\}$. \black{The channel is assumed to remain constant across all $m$ and $s$.}
The exhaustive beamforming approach is employed \cite{beamforming}: the transmit precoder $\mathbf{f}_s^{}~\in~\mathbb{C}^{N_\mathrm{T}^{}\times 1}$ remains fixed within block $s$, and the receiving combiner $\mathbf{w}_m^{}~\in~\mathbb{C}^{N_\mathrm{R}^{}\times 1}$ cycles over $M$ unit-norm vectors across symbols ($\|\mathbf{w}_m^{}\|^2 = 1$, $\|\mathbf{f}_s^{}\|^2 = 1$). 
\blue{The precoder codebook $\mathbf{F}=[\mathbf{f}_0^{}\,\cdots\,\mathbf{f}_{S-1}^{}]$ is applied across the $S$ blocks following a configured beam-management procedure (e.g., based on \acp{srs}), and is therefore known at the \ac{bs}. Although the \ac{aod} of a \ac{nlos} path cannot be solely inferred from the geometry of an unknown reflector, it appears in the received signal via the known transmit factor $\mathbf{C}=\mathbf{F}^\mathsf{H}\mathbf{A}_{N_\mathrm{T}^{}}(\bm{\phi})$ in \eqref{eq:def_tensor_H}. Consequently, the \ac{aod} is jointly identifiable with the \ac{aoa} and delay by means of the structured tensor estimator (e.g., tensor- or subspace-based methods), as commonly assumed in single-\ac{bs} multipath-assisted localization~\cite{NuriaSAM2022}. The pilot modifications proposed in HoloTrace do not alter this beam-management codebook.}
Assuming that the total transmit power $P_t^{}$ is uniformly distributed across all the $K$ subcarriers over bandwidth $K\Delta f$, such that the energy per subcarrier is $E_s^{} = P_t^{} /(K \Delta f)$, the received signal at subcarrier $k$, symbol $m$, and block $s$ is
\begin{equation}
    y_{m,s,k}^{} = {\sqrt{E_s^{}}}\mathbf{w}_m^\mathsf{H} \mathbf{H}_k^{} \mathbf{f}_s^{\ast} x_{m,s,k}^{} + q_{m,s,k}^{},
    \label{eq:rx_signal}
\end{equation}
\blue{with $q_{m,s,k}^{} \sim \mathcal{CN}(0,N_0^{})$, where $N_0^{}$ denotes the noise energy per received subcarrier sample.}
The received signal model is equivalently written in tensor form:
\begin{equation}\label{eq:def_ten_Y}
    \mathcal{Y} = \mathcal{H} \odot \mathcal{X} + \mathcal{Q},
\end{equation}
where $\mathcal{Y},\mathcal{X},\mathcal{Q}$ are order-3 tensors indexed by $(m,s,k)$ and the \blue{effective noiseless channel tensor} admits the decomposition
\begin{equation} \label{eq:def_tensor_H}
    \blue{\mathcal{H} = \sqrt{E_s^{}}\,\mathcal{A} \times_1^{} \mathbf{B} \times_2^{} \mathbf{C} \times_3^{} \mathbf{D}} ,
\end{equation}
with $\mathcal{A}$ diagonal, $\mathcal{A}^{\ell,\ell,\ell}~=~\alpha_\ell^{}$, $\mathbf{B}~=~\mathbf{W}^\mathsf{H} \mathbf{A}_{N_\mathrm{R}^{}}^{}${, where $\mathbf{W}=[\mathbf{w}_0^{} \, \cdots \, \mathbf{w}_{M-1}^{}]$, }$\mathbf{C}~=~\mathbf{F}^\mathsf{H} \mathbf{A}_{N_\mathrm{T}^{}}^{}${, where $\blue{\mathbf{F}=[\mathbf{f}_0^{} \, \cdots \, \mathbf{f}_{S-1}^{}]}$, $\mathbf{A}_{N_\mathrm{R}^{}}^{}~=~[\mathbf{a}_{N_\mathrm{R}^{}}(\theta_0^{}) \, \cdots \, \mathbf{a}_{N_\mathrm{R}^{}}(\theta_{L-1}^{})]$ and $\blue{\mathbf{A}_{N_\mathrm{T}^{}}^{}~=~[\mathbf{a}_{N_\mathrm{T}^{}}(\phi_0^{}) \, \cdots \, \mathbf{a}_{N_\mathrm{T}^{}}(\phi_{L-1}^{})]}$} are the array response matrices at \ac{bs} and \ac{ue}, and $\mathbf{D} = [\mathbf{d}(\tau_0^{}) \, \cdots \, \mathbf{d}(\tau_{L-1}^{})]$, with $\mathbf{d}(\tau_\ell^{}) = [1 \; e^{-j2\pi \Delta f \tau_\ell^{}} \; \cdots \; e^{-j2\pi (K-1)\Delta f \tau_\ell^{}}]^\mathsf{T}$.

\subsection{Operational Model and Localization Procedure}

\label{sec:op_model_chan_est}
The \ac{bs} estimates the channel for communication purposes, leveraging the known pilot structure. 
\blue{We assume that the nominal (agreed) pilot sequence is} $\mathcal{X}^{:,:,k}~=~\mathbf{1}_{M\times S}^{},$ $\forall k \in \{0, \dots, K-1\}$. 
The received signal tensor is thus expressed as
    $\mathcal{Y} = \mathcal{H} + \mathcal{Q}$,
where $\mathcal{Y}$, $\mathcal{H}$, and $\mathcal{Q}$ are defined as in~\eqref{eq:def_ten_Y}.
\blue{Since the \ac{bs} estimates the channel under the assumed nominal pilot sequence, channel effects and pilot manipulation are indistinguishable within the adopted signal model.}

Given a channel parameter estimator $\bm{\zeta}(\cdot)$, capable of jointly recovering the tuple $(\tau_\ell^{}, \theta_\ell^{}, \phi_\ell^{})$ for all the paths, we define the output of the estimator as
$\widehat{\boldsymbol{\rho}}=\bm{\zeta}(\mathcal{Y})$,
where $\widehat{\boldsymbol{\rho}} = [\widehat{\boldsymbol{\rho}}_0^{} \; \cdots \; \widehat{\boldsymbol{\rho}}_{\widehat{L}-1}^{}] \in \mathbb{R}^{3 \times \widehat{L}}$ collects the estimated parameters for all the $\widehat{L}$ detected paths, with $\widehat{\boldsymbol{\rho}}_\ell^{}=[ \widehat{\tau}_\ell^{}\,
\widehat{\theta}_\ell^{} \,
\widehat{\phi}_\ell^{}]^{\mathsf{T}}$, $\ell \in \{ 0, \ldots, \widehat{L}-1\}$.
Here, $\widehat{\tau}_\ell^{}$ denotes the estimated delay, and $\widehat{\theta}_\ell^{}$ and $\widehat{\phi}_\ell^{}$ are the estimated \ac{aoa} and \ac{aod} of the $\ell$-th path, respectively.

\blue{The collection of channel parameter estimates $\widehat{\boldsymbol{\rho}}$ enables \ac{ue} localization.} Considering the \ac{los} and one \ac{nlos} path (selected by its amplitude, delay, or other criteria), with their respective indexes $\ell=0$ and $\ell=1$, a position estimate\footnote{More sophisticated methods can be found, e.g., in \cite{gonzalez2024integrated, NuriaSAM2022}, which can be used as initializers for maximum likelihood estimation.} is \cite{chen2023learning}
\begin{equation}
    \widehat{\mathbf{p}}_\mathrm{UE}^{} = \mathbf{p}_\mathrm{BS}^{} + \widehat{d}_0^{}\, \mathbf{u}(\widehat{\theta}_0^{}\blue{+o_\mathrm{BS}^{}}),
    \label{eq:p_est}
\end{equation}
where $\mathbf{u}(\vartheta) = [\cos\vartheta \; \sin\vartheta]^{\mathsf{T}}$, and $\widehat{d}_0^{}$ is the estimated \ac{bs}-\ac{ue} distance, computed via the law of sines:
\begin{equation}
    \widehat{d}_0^{} = \frac{c\, \Delta\widehat{\tau}_{1,0}^{} \sin(\Delta\widehat{\theta}_{1,0}^{} + \Delta\widehat{\phi}_{1,0}^{})}{\sin(\Delta\widehat{\theta}_{1,0}^{}) + \sin(\Delta\widehat{\phi}_{1,0}^{}) - \sin(\Delta\widehat{\theta}_{1,0}^{} + \Delta\widehat{\phi}_{1,0}^{})},
    \label{eq:d_los}
\end{equation}
where $\Delta\widehat{\tau}_{1,0}^{}=\widehat{\tau}_1^{}-\widehat{\tau}_0^{}$, $\Delta\widehat{\theta}_{1,0}^{} = |\widehat{\theta}_1^{} - \widehat{\theta}_0^{}|$, and $\Delta\widehat{\phi}_{1,0}^{} = |\widehat{\phi}_1^{} - \widehat{\phi}_0^{}|$. 
\blue{The localization expression is well-defined under the standard regularity conditions for single-\ac{bs} localization, including correct path detection and association, resolvable delay-angle signatures, and a non-singular geometry (i.e., the denominator in \eqref{eq:d_los} is bounded away from zero).} 
This approach is \ac{bs}-\ac{ue} clock offset independent.


\vspace{-10pt}
\subsection{Communication Strategy}
The \ac{bs} may optimize the communication by selecting combiner and precoder beams based on different strategies.
In a non-coherent strategy, the \ac{bs} selects the beam pair maximizing total received power, i.e.,
    $\blue{(\widehat{m}, \widehat{s}) =  \underset{ m' \in \{0,\ldots,M-1\},\, s' \in \{0,\ldots,S-1\} }{\text{arg\,max}}  \sum_{k=0}^{K-1} |y_{m',s',k}^{}|^2}$, 
where $\mathbf{w}_\mathrm{comm}^{} = \mathbf{w}_{\widehat{m}}^{},$ and $ \mathbf{f}_\mathrm{comm}^{} = \mathbf{f}_{\widehat{s}}^{}.$ 
The achievable sum rate is \cite{NuriaSAM2022}
\begin{equation}
    R = \sum_{k=0}^{K-1} \Delta f \log_2 \left(1 + \frac{\gamma_k^{} E_s^{}}{N_0^{}} \right),
    \label{eq:comm_rate}
\end{equation}
with $\gamma_{k}^{}=\left|\sum_{\ell=0}^{L-1}\alpha_{\ell}^{}\mathbf{w}_{\text{comm}}^{\mathsf{H}}\mathbf{a}_{\text{BS}}^{}(\theta_{\ell}^{})\mathbf{a}_{\text{UE}}^{\mathsf{T}}(\phi_{\ell}^{})\mathbf{f}_{\text{comm}}^{\ast}e^{-\jmath2\pi k\Delta{f}\tau_{\ell}^{}}\right|^{2}$.
\blue{Since the serving beams are selected from the received power and the rate \eqref{eq:comm_rate} is evaluated on the true channel, a spoofing design that preserves the \ac{los} \ac{aoa}/\ac{aod} retains beams aligned with the true \ac{los}. The single-anchor position estimate \eqref{eq:p_est}--\eqref{eq:d_los} is then displaced mainly through the range $\widehat{d}_0^{}$, which depends on the inter-path differences. A spoofed location along the true \ac{los} bearing but at a different range therefore moves the inferred position while preserving the communication beams, and hence the rate. The communication impact is thus geometry-dependent rather than intrinsic.}

\vspace{-10pt}
\subsection{Threat Model}
The operational paradigm considered above enables the \ac{bs} to infer the \ac{ue} location without explicit cooperation or awareness from the \ac{ue}. \blue{The adversary is modeled as an \emph{honest-but-curious} single-anchor observer: it complies with the communication protocol but reuses routine uplink pilots to perform \textit{non-consensual} localization. We study the signal-level feasibility of such attacks under a single-snapshot channel-estimation model. Protocol-level mechanisms such as pilot authentication, repeated-transmission consistency checks, or active spoofing detection are outside the scope of this work.} 
In this non-consensual scenario, the \ac{bs} exploits standard channel estimation procedures and known pilot structures, originally intended for communication, to recover geometric parameters $\widehat{\boldsymbol{\rho}}$
that uniquely determine the \ac{ue} position via passive observations. The \ac{ue} is neither notified of nor able to control the subsequent use of these estimates for localization purposes. This presents a significant privacy risk: the \ac{bs} can continuously and covertly track the \ac{ue} by reusing routine uplink pilot transmissions, without the \ac{ue}'s consent or participation, and without any protocol-level safeguard. 

\blue{To mitigate non-consensual localization, we propose HoloTrace, a signal-level privacy-preserving framework in which the UE intentionally modifies the transmitted pilot symbols $x_{m,s,k}$ to bias the location inferred by the \ac{bs}. HoloTrace does not require protocol changes or network-side support, but its impact on communication performance depends on the selected spoofed location.}
\vspace{-5pt}

\section{HoloTrace: Location Privacy Framework}
\label{sec:holotrace}

To induce the \ac{bs} to estimate a prescribed, fictitious location, still keeping the analog/digital precoder unchanged, the \ac{ue} generates a modified pilot tensor $\widetilde{\mathcal{X}} \in \mathbb{C}^{M \times S \times K}$.
Since the \ac{bs} exploits both the scattering geometry and single-anchor \ac{tdoa} measurements to resolve the \ac{ue} position, geometric constraints, inherent to the measurement model, preclude effective location spoofing by independently shifting each parameter. Consequently, a successful spoofing attack requires jointly inducing fictitious estimates of both the \ac{ue} and \ac{sp} locations, such that the manipulated measurements remain consistent with the propagation geometry.

\vspace{-7pt}
\subsection{Position Spoofing Formulation}
\label{sec:spoofing_formulation}

Given an \emph{assumed model} at the \ac{bs}, based on the agreed pilots $\mathcal{X}$ (which in the remainder of the paper will be set to all ones and can thus be omitted), $\mathcal{Y}_{\text{am}}^{}\sim\mathcal{CN}(\mathcal{Y};
\mathcal{H}(\bm{\rho},\bm{\alpha}) \odot \mathcal{X},\sigma_q^{2}\mathcal{I})$, 
\black{with $\bm{\rho}=[\bm{\tau}^{\mathsf{T}} \, \bm{\theta}^{\mathsf{T}} \, \bm{\phi}^{\mathsf{T}} ]^\mathsf{T}$, $\bm{\tau}= [\tau_0^{}\, \cdots \, \tau_{L-1}^{}]^\mathsf{T}$, $\bm{\theta}= [\theta_0^{}\, \cdots \, \theta_{L-1}^{}]^\mathsf{T}$, $\bm{\phi}= [\phi_0^{}\, \cdots \, \phi_{L-1}^{}]^\mathsf{T}$ and $\bm{\alpha}= [\alpha_0^{}\, \cdots \, \alpha_{L-1}^{}]^\mathsf{T}$,} the \ac{bs} always estimates the channel parameters as 
\begin{equation}
\widehat{\bm{\rho}},\widehat{\bm{\alpha}}= \underset{ \bm{\rho},\bm{\alpha} }{\text{arg\,min}} \, \Vert\mathcal{Y}^{} - \mathcal{H}(\bm{\rho},\bm{\alpha}) \odot \mathcal{X}\Vert^{2},
\end{equation}
\black{irrespective of how  $\mathcal{Y}$ is generated.} 
The \ac{ue} modifies the pilot signal to $\widetilde{\mathcal{X}}$ (subject to \ac{ue}-side power constraints) so that the \emph{true model} becomes $\mathcal{Y}_{\text{tm}}^{}(\widetilde{\mathcal{X}})\sim\mathcal{CN}(\mathcal{Y};
\mathcal{H}(\bm{\rho},\bm{\alpha}) \odot \widetilde{\mathcal{X}},\sigma_q^{2}\mathcal{I})$. 
 In order to create a spoofed channel with parameters $\overline{\bm{\rho}}$ (corresponding to a spoofed location), the value of $\widetilde{\mathcal{X}}$ is selected so that 
\begin{equation}
C(\bm{\alpha}',\bm{\rho}'|\widetilde{\mathcal{X}})=\Vert\mathcal{Y}_{\text{tm}}^{}(\widetilde{\mathcal{X}})-\mathcal{H}(\bm{\rho}',\bm{\alpha}') \odot \mathcal{X}\Vert^{2} \label{eq:spoofingCriterion1}
\end{equation}
is minimal for $\bm{\rho}'=\overline{\bm{\rho}}$. \black{Since the \ac{ue} does not have access to the noisy observation $\mathcal{Y}_{\text{tm}}^{}(\widetilde{\mathcal{X}})$
at the \ac{bs}, it can replace $\mathcal{Y}_{\text{tm}}^{}(\widetilde{\mathcal{X}})$
with the noise-free spoofed signal, i.e., the mean $\mathcal{H}(\bm{\rho},\bm{\alpha}) \odot \widetilde{\mathcal{X}}$ (which depends on the true $\bm{\rho}$ and $\bm{\alpha}$ values), leading to the following expression 
\begin{equation}
\mathcal{C}(\bm{\alpha}',\bm{\rho}'|\widetilde{\mathcal{X}},\bm{\alpha},\bm{\rho})=\Vert\mathcal{H}(\bm{\rho},\bm{\alpha}) \odot \widetilde{\mathcal{X}}-\mathcal{H}(\bm{\rho}',\bm{\alpha}') \odot \mathcal{X}\Vert^{2} \label{eq:spoofingCriterion2}
\end{equation}
to be minimal for $\bm{\rho}'=\overline{\bm{\rho}}$ and for some $\bm{\alpha}'\neq \bm{0}_L^{}$. }
This is equivalent to designing a certain bias term in the sense of the \ac{mcrb} \cite{fortunati2017performance}. 
%
An even stronger \emph{perfect spoofing} design can also be formulated as a constrained feasibility problem: 
\begin{align}
\text{find} & \quad\widetilde{\mathcal{X}}
\label{eq:spoofingCriterion3}\\ 
\nonumber \text{s.t.} & \quad 
\mathcal{C}(\bm{\alpha}',\overline{\bm{\rho}}|\widetilde{\mathcal{X}},\bm{\alpha},\bm{\rho})
=0 
\end{align}
for some $\bm{\alpha}'\neq \bm{0}$. 

Since $\mathcal{H}(\bm{\rho},\bm{\alpha}) \odot \widetilde{\mathcal{X}}$ depends on the values of the complex channel gains $\bm{\alpha}$ and the location-dependent geometric channel parameters $\bm{\rho}$, pilot spoofing breaks down into two regimes: a coherent regime where the \ac{ue} knows both $\bm{\alpha}$ and $\bm{\rho}$ (called \emph{oracle} solution) and a non-coherent regime where the \ac{ue} knows $\bm{\rho}$ but not $\bm{\alpha}$ (called \emph{blind} solution), in which case \eqref{eq:spoofingCriterion2} and \eqref{eq:spoofingCriterion3} should hold $\forall \bm{\alpha}$. Variations include the knowledge of statistics on $\bm{\alpha}$ and on the number of channel paths. 
\blue{
The geometric channel parameters (\ac{aoa}, \ac{aod}, and \ac{toa}) can be obtained either from external side information or from channel estimation.  For instance, given the \ac{ue} position from \ac{gnss}, the known \ac{bs} position, and an environmental map, ray tracing can identify the relevant propagation paths, including the associated \acp{sp}. Alternatively, these parameters can be directly estimated from pilot-based channel measurements.
In contrast, the complex channel gains are harder to acquire, as they depend on small-scale fading, reflection coefficients, blockage, and instantaneous propagation conditions. In reciprocal channels, however, the \ac{ue} may estimate the downlink gains. \blue{Under quasi-static geometry, the uplink path gains coincide with the estimated downlink gains up to a common global phase, which is irrelevant here because the localization estimator is phase-invariant and the spoofed gains $\bm{\lambda}$ are free design variables. This reciprocity is required only for the calibrated effective channel (e.g., via standard TDD reciprocity calibration, which is outside our scope) and only in the oracle regime; the blind regime does not assume it.} 
Therefore, in the oracle regime, we assume that the \ac{ue} has access to both the geometric parameters and an estimate of the complex \ac{csi}.}
\blue{The oracle and blind regimes are described next.}
\vspace{-10pt}

\begin{figure*}[!ht]
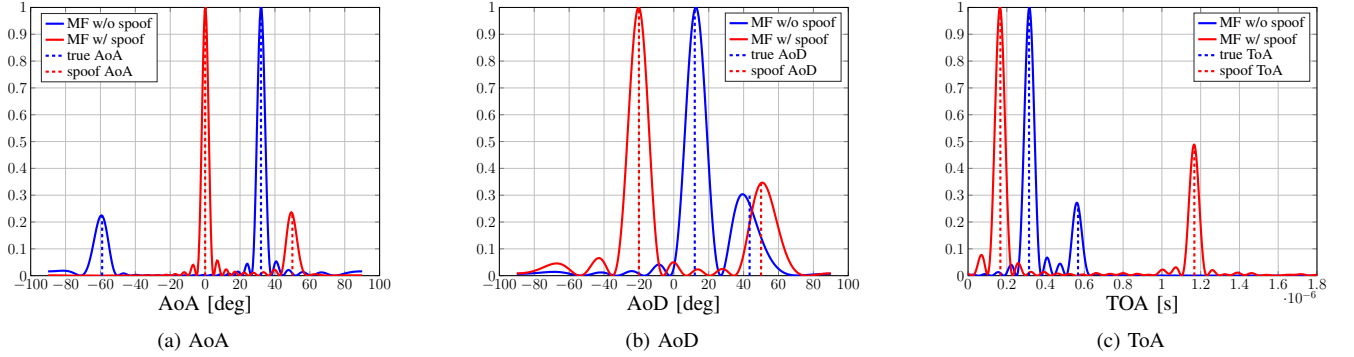

\centering
    \subfloat[AoA \label{fig:aoa_oracle}]{  
    \resizebox{0.33\linewidth}{!}{\input{figures/aoa_oracle.tex}}
    }
    \subfloat[AoD \label{fig:aod_oracle}]{    
    \resizebox{0.33\linewidth}{!}{\input{figures/aod_oracle.tex}}
    }
    \subfloat[ToA \label{fig:tdoa_oracle}]{       
    \resizebox{0.33\linewidth}{!}{\input{figures/tdoa_oracle.tex}}
    }
    \caption{Oracle spoofing: (a) MF AoA estimation with $N_{\mathrm{R}}^{}=M=24$, (b) MF AoD estimation with $N_{\mathrm{T}}^{}=S=8$, and (c) MF ToA estimation with $K=128$. The solid red and blue lines denote, respectively, the MF estimations with and without spoofing. The dashed lines represent the true (blue) and spoofing (red) values.}
    \label{fig:oracle_spoof}
    \vspace{-10pt}
\end{figure*}

\subsection{Oracle Spoofing}
\blue{Assuming known \ac{csi} at the transmitter}, the HoloTrace framework can induce the \ac{bs} to estimate arbitrary feasible positions within its coverage. \blue{We emphasize that HoloTrace is restricted to a per-resource-element (diagonal) pilot scaling under a fixed analog precoder, rather than a full linear precompensation of the channel, which would require digital precoder control. The oracle design below shows that this restricted, pilot-only actuation is already sufficient for exact spoofing when \ac{csi} is available. }
\blue{In the following, we adopt a bottom-up design approach: we first give a unified single-domain oracle model that covers \ac{aoa}, \ac{aod}, and \ac{toa} spoofing, and then develop the general HoloTrace attack under the oracle regime. The single-domain model drops unnecessary dimensions for exposition, while preserving the common projection structure of the three cases.}


\subsubsection[Single-Domain Oracle Spoofing]{\blue{Single-Domain Oracle Spoofing}}
\label{sec:oracle_aoa}
\blue{The \ac{aoa}, \ac{aod}, and \ac{toa} oracle spoofing problems share the same algebraic structure. To avoid repeating the same derivation three times, we introduce a single-domain model}
\begin{equation}
\blue{
    \mathbf{y}_{\mathrm{tm}}^{}
    =
    \operatorname{diag}(\mathbf{x})\,
    \mathbf{A}(\bm{\omega})\,
    \bm{\alpha}
    +
    \mathbf{q},
}
\label{eq:single_domain_oracle_model}
\end{equation}
\blue{where $\bm{\omega}$ denotes the parameter vector of one channel domain and $\mathbf{A}(\bm{\omega})$ is the corresponding steering matrix. The \ac{bs} estimates this parameter by minimizing $\widehat{\bm{\omega}}=\underset{\bm{\varpi}}{\operatorname{arg\,min}}\,\|\mathbf{y}_{\mathrm{tm}}^{}-\mathbf{A}(\bm{\varpi})\,\bm{\alpha}(\bm{\varpi})\|^2$, where $\bm{\alpha}(\bm{\varpi})$ is the conditional path-gain estimate for the candidate $\bm{\varpi}$.}

\begin{prop}[Single-domain oracle spoofing]
\label{prop:single_domain_oracle}
\label{prop:AOASpoof}
\blue{Assume that $\mathbf{A}(\bm{\omega})\bm{\alpha}$ has no zero entries. To make the noise-free received signal in \eqref{eq:single_domain_oracle_model} lie in the subspace associated with a target parameter vector $\overline{\bm{\omega}}$, the \ac{ue} can transmit}
\begin{equation}
\blue{
    \widetilde{\mathbf{x}}
    =
    \operatorname{diag}^{-1}
    \big(
    \mathbf{A}(\bm{\omega})\bm{\alpha}
    \big)
    \mathbf{A}(\overline{\bm{\omega}})
    \bm{\lambda},
}
\label{eq:single_domain_oracle_pilot}
\end{equation}
\blue{where $\bm{\lambda}\in\mathbb{C}^{L}$ is a spoofed path-gain vector that can also be chosen to meet the pilot-power constraint.}
\end{prop}
\begin{proof}
    \blue{See Appendix~\ref{app:single_domain_oracle}.}
\end{proof}

\blue{The three single-domain cases are obtained by selecting the appropriate steering matrix in \eqref{eq:single_domain_oracle_model}:}
\begin{align}
    \mathbf{y}_{\text{tm}}^{}
    & =
    \operatorname{diag}(\mathbf{x})\,
    \mathbf{B}(\bm{\theta})\,
    \bm{\alpha}
    +
    \mathbf{q},
    \quad
    \mathbf{B}(\bm{\theta})
    =
    \mathbf{W}^{\mathsf{H}}
    \mathbf{A}_{N_\mathrm{R}^{}}(\bm{\theta}),
\label{eq:simo_model}\\[-1mm]
    \mathbf{y}_{\text{tm}}^{}
    & =
    \operatorname{diag}(\mathbf{x})\,
    \mathbf{C}(\bm{\phi})\,
    \bm{\alpha}
    +
    \mathbf{q},
    \quad
    \mathbf{C}(\bm{\phi})
    =
    \mathbf{F}^{\mathsf{H}}
    \mathbf{A}_{N_{\mathrm{T}}^{}}(\bm{\phi}),
\label{eq:miso_model}\\[-1mm]
    \mathbf{y}_{\text{tm}}^{}
   &  =
    \operatorname{diag}(\mathbf{x})\,
    \mathbf{D}(\bm{\tau})\,
    \bm{\alpha}
    +
    \mathbf{q}.
\label{eq:toa_model}
\end{align}
\blue{Thus, \eqref{eq:single_domain_oracle_pilot} becomes an \ac{aoa} spoofing pilot when $\mathbf{A}=\mathbf{B}$ and $\bm{\omega}=\bm{\theta}$, an \ac{aod} spoofing pilot when $\mathbf{A}=\mathbf{C}$ and $\bm{\omega}=\bm{\phi}$, and a \ac{toa}-domain spoofing pilot when $\mathbf{A}=\mathbf{D}$ and $\bm{\omega}=\bm{\tau}$. In the last case, the directly manipulated delay estimates are later used through \ac{tdoa}-based single-anchor localization.}

Fig.~\ref{fig:aoa_oracle} shows the \ac{aoa} specialization in a \ac{simo} setting. The \ac{aoa} spectrum obtained via a \ac{mf} estimator\footnote{The \ac{mf} estimator does not presume knowledge of the number of paths and correlates $\mathbf{y}$ with \black{$\mathbf{b}(\theta)=\mathbf{W}^\mathsf{H} \mathbf{a}^{}_{N_\mathrm{R}^{}}(\theta)$.} } without spoofing (blue) displays peaks at the true angles (blue dashed). Under the spoofing pilot (red), the estimator is forced to yield a peak at the spoofed angle (red dashed), while the true peak is suppressed. \blue{Thus, in this illustrative oracle setting, pilot design can move the dominant \ac{aoa} estimate toward the prescribed spoofed value.}\footnote{{Note that these plots are for illustration only; more sophisticated estimators will be used in the results.}}

Fig.~\ref{fig:aod_oracle} demonstrates the \ac{aod} specialization in the \ac{miso} setting. Without attack (blue), the \ac{aod} spectrum obtained via a \ac{mf} estimator\footnote{\black{The \ac{mf} estimator correlates $\mathbf{y}$ with $\mathbf{c}(\phi)=\mathbf{F}^\mathsf{H} \mathbf{a}^{}_{N_\mathrm{T}^{}}(\phi)$.}} reveals peaks at the true departure angles (blue dashed). Under the spoofing pilot (red), the estimator output is shifted to the spoofed angles (red dashed), with the original peaks suppressed. \blue{This illustrates that the oracle pilot can redirect the \ac{aod} estimate toward the prescribed spoofed values under the matched-filter estimator.}

Fig.~\ref{fig:tdoa_oracle} illustrates the \ac{toa} specialization in the \ac{ofdm} \ac{siso} scenario. Without attack (blue), the delay spectrum exhibits peaks at the true \ac{toa} values (blue dashed). With the spoofing pilot (red), the estimator output is shifted to the prescribed spoofed delay (red dashed), with the true peaks suppressed\blue{, confirming that, for the illustrative \ac{siso} oracle case, pilot design can redirect the delay estimate toward the prescribed spoofed delay.}

\subsubsection{Joint AoA/AoD/TDoA Spoofing}
\label{sec:oracle}
Finally, we consider the full \ac{mimo}-\ac{ofdm} channel tensor model, where the parameters of interest, \ac{aoa}, \ac{aod}, and \ac{tdoa}, are embedded in the factor matrices $\mathbf{B}$, $\mathbf{C}$, and $\mathbf{D}$, respectively. \blue{The noise-free tensor $\mathcal{H}$ in~\eqref{eq:def_tensor_H} is unfolded along the frequency (third) mode and vectorized as}
\begin{equation}
    \mathbf{h} = \operatorname{vec}([\mathcal{H}]_{3}^{}) = \black{\sqrt{E_s^{}}}(\mathbf{B} \circledast \mathbf{C} \circledast \mathbf{D}) \bm{\alpha}.
\end{equation}

For received tensor $\mathcal{Y}_\text{tm}^{}$ in~\eqref{eq:def_ten_Y}, define $\mathbf{y}_\text{tm}^{} = \operatorname{vec}([\mathcal{Y}_\text{tm}^{}]_{3}^{})$. The  path gain estimate is
\begin{equation}
    \widehat{\bm{\alpha}} = \mathbf{Z}^\dagger \mathbf{y}_\text{tm}^{}, \qquad \blue{\mathbf{Z} = \sqrt{E_s^{}}(\mathbf{B} \circledast \mathbf{C} \circledast \mathbf{D})}.
\end{equation}

\begin{prop}[\blue{Oracle Multi-Domain HoloTrace Design}]
    To achieve perfect spoofing  of geometric parameters 
$\blue{\overline{\bm{\rho}} = [\overline{\bm{\rho}}_0^{} \; \cdots \; \overline{\bm{\rho}}_{{L-1}}^{}] \in \mathbb{R}^{3 \times {L}}}$, construct the pilot tensor as
\begin{align}
    \widetilde{\mathcal{X}}=\left(\mathcal{L}\times_1^{} \overline{\mathbf{B}}\times_2^{} \overline{\mathbf{C}}\times_3^{} \overline{\mathbf{D}} \right) \oslash \left(\mathcal{A}\times_1^{} \mathbf{B}\times_2^{} \mathbf{C}\times_3^{} \mathbf{D}\right),
\end{align}
where  $\mathcal{L}$ is a diagonal tensor with entries $\mathcal{L}^{\ell,\ell,\ell} = \lambda_\ell^{}$ (the spoofed path gains). The matrices 
$
\overline{\mathbf{B}} = \mathbf{B}(\bm{\overline\theta}), \overline{\mathbf{C}} = \mathbf{C}(\bm{\overline\phi}), \overline{\mathbf{D}} = \mathbf{D}(\bm{\overline\tau})
$ 
are constructed from the desired (spoofed) \ac{aoa}, \ac{aod}, and \ac{toa}, respectively.
\end{prop}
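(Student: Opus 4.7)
The plan is to verify that the proposed $\widetilde{\mathcal{X}}$ satisfies the perfect-spoofing feasibility condition \eqref{eq:spoofingCriterion3} with $\bm{\alpha}' = \bm{\lambda}$ and $\bm{\rho}' = \overline{\bm{\rho}}$. Concretely, I will show that the true-model mean $\mathcal{H}(\bm{\rho},\bm{\alpha}) \odot \widetilde{\mathcal{X}}$ coincides identically with the assumed-model mean $\mathcal{H}(\overline{\bm{\rho}},\bm{\lambda}) \odot \mathcal{X}$, where recall $\mathcal{X}$ is the all-ones tensor. Once this identity is established, the cost $\mathcal{C}(\bm{\lambda},\overline{\bm{\rho}}|\widetilde{\mathcal{X}},\bm{\alpha},\bm{\rho})$ is zero by definition, and perfect spoofing follows.

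The first step is to write both channel tensors explicitly in Tucker/CP form using \eqref{eq:def_tensor_H}:
\begin{align}
\mathcal{H}(\bm{\rho},\bm{\alpha}) &= \sqrt{E_s^{}}\,\mathcal{A}\times_1^{}\mathbf{B}\times_2^{}\mathbf{C}\times_3^{}\mathbf{D},\\
\mathcal{H}(\overline{\bm{\rho}},\bm{\lambda}) &= \sqrt{E_s^{}}\,\mathcal{L}\times_1^{}\overline{\mathbf{B}}\times_2^{}\overline{\mathbf{C}}\times_3^{}\overline{\mathbf{D}}.
\end{align}
Substituting the proposed $\widetilde{\mathcal{X}}$ into $\mathcal{H}(\bm{\rho},\bm{\alpha})\odot\widetilde{\mathcal{X}}$ and invoking the elementwise identity $a\odot(b\oslash a) = b$ (valid wherever $a \neq 0$) cancels the true-channel factor and leaves exactly $\mathcal{H}(\overline{\bm{\rho}},\bm{\lambda})$. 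The $\sqrt{E_s^{}}$ scalars cancel in the ratio, which is why neither the numerator nor the denominator in the proposition statement carries this prefactor explicitly.

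The next step is to argue the result is a bona fide generalization of the joint AoA/AoD/ToA construction already proven. Indeed, the vectorized unfolding $\mathbf{h}=\sqrt{E_s^{}}(\mathbf{B}\circledast\mathbf{C}\circledast\mathbf{D})\bm{\alpha}$ makes the problem formally identical to the joint AoA/AoD case treated in Appendix~\ref{app:aoa_aod_spoof}, with the Khatri–Rao factor $\mathbf{Z}$ now containing three modes rather than two. The same derivation that produced \eqref{eq:vec_spoof} and \eqref{eq:full_x_spoof}, applied mode-wise via the $\times_i^{}$ operators, yields the tensor Hadamard-division form in the statement. In particular, the least-squares estimator with factor matrix $\mathbf{Z}(\overline{\bm{\rho}})$ admits $\bm{\lambda}$ as the exact gain solution, achieving zero residual in \eqref{eq:spoofingCriterion3}.

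The main obstacle, and the point that needs the most care, is the well-posedness of the Hadamard division: the denominator $\mathcal{H}(\bm{\rho},\bm{\alpha})$ may vanish at isolated $(m,s,k)$ indices due to destructive combining of the $L$ paths, producing singularities in $\widetilde{\mathcal{X}}$. I would handle this by (i) noting that for generic $\bm{\alpha}$ and $L<\min(M,S,K)$ the zero set has measure zero in the discrete index grid and occurs only on a negligible subset in practice, and (ii) remarking that those entries can be safely set to any bounded value without breaking the factorization, since they carry no energy on the true-channel side. A secondary issue worth flagging is that the resulting $\widetilde{\mathcal{X}}$ is in general not constant-modulus; this is why the power-constraint discussion via $\bm{\lambda}$ is essential and motivates the non-coherent/blind analysis in the subsequent subsection.
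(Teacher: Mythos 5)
Your proof is correct and takes essentially the same route as the paper, whose entire argument is the one-liner ``the design follows from combining the separate designs for AoA, AoD, and ToA'': your explicit verification that $\mathcal{H}(\bm{\rho},\bm{\alpha})\odot\widetilde{\mathcal{X}}=\mathcal{H}(\overline{\bm{\rho}},\bm{\lambda})$ via the cancellation $a\odot(b\oslash a)=b$ is exactly the mode-wise generalization of Appendix~\ref{app:aoa_aod_spoof} that the paper is invoking. Your added remark on the well-posedness of the Hadamard division mirrors the non-vanishing assumption the paper makes explicitly only in the SIMO case, and is a reasonable (if not strictly necessary) addition.
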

\begin{proof}
    \blue{Let $\mathcal{H}=\mathcal{A}\times_1^{}\mathbf{B}\times_2^{}\mathbf{C}\times_3^{}\mathbf{D}$ and $\overline{\mathcal{H}}=\mathcal{L}\times_1^{}\overline{\mathbf{B}}\times_2^{}\overline{\mathbf{C}}\times_3^{}\overline{\mathbf{D}}$ be the true and target spoofed noiseless tensors. If all entries of $\mathcal{H}$ are nonzero,} 
    \blue{
        $\mathcal{H}\odot\widetilde{\mathcal{X}}
        =
        \mathcal{H}\odot
        \left(\overline{\mathcal{H}}\oslash\mathcal{H}\right)
        =
        \overline{\mathcal{H}}$,
    }
    \blue{so the noise-free received tensor coincides with the spoofed channel $\overline{\mathcal{H}}$ (parameters $\overline{\bm{\rho}}$, gains $\bm{\lambda}$), and the residual in \eqref{eq:spoofingCriterion3} is zero. The free gains $\bm{\lambda}$ normalize $\widetilde{\mathcal{X}}$ to the prescribed average transmit power.}
\end{proof}
\blue{\emph{Remark.} Unlike the nominal pilots, $\widetilde{\mathcal{X}}$ is in general not constant-modulus: where the true beamformed channel $\mathcal{H}$ is close to a beamforming null, the entries of $\overline{\mathcal{H}}\oslash\mathcal{H}$ become large, so under a fixed average-power budget, less energy reaches the useful resource elements. The resulting effective-\ac{snr} penalty is geometry-dependent.}


\subsubsection[Operation under Imperfect CSI]{\blue{Operation under Imperfect CSI}}
\blue{For notational simplicity, we state the imperfect-\ac{csi} effect for the \ac{aoa}-only \ac{simo} specialization of Section~\ref{sec:oracle_aoa}. The same projection-residual interpretation applies to the \ac{aod}, \ac{toa}, and full \ac{mimo}-\ac{ofdm} cases by replacing the \ac{aoa} basis $\mathbf{B}(\bm{\theta})$ with $\mathbf{C}(\bm{\phi})$, $\mathbf{D}(\bm{\tau})$, or $\mathbf{B}(\bm{\theta})\circledast\mathbf{C}(\bm{\phi})\circledast\mathbf{D}(\bm{\tau})$, respectively.}

\begin{prop}[Oracle \ac{aoa} spoofing with imperfect \ac{csi}]
\label{prop:imperfect_csi_aoa}
\blue{Consider the \ac{simo} oracle design with effective true channel $\mathbf{h}=\mathbf{B}(\bm{\theta})\bm{\alpha}$ and target spoofed signal $\overline{\mathbf{h}}=\mathbf{B}(\overline{\bm{\theta}})\bm{\lambda}$. If the \ac{ue} only has the estimate $\widehat{\mathbf{h}}=\mathbf{h}+\delta\mathbf{h}$, and constructs the oracle pilot as $\widetilde{\mathbf{x}}=\operatorname{diag}^{-1}(\widehat{\mathbf{h}})\overline{\mathbf{h}}$, then the projection cost evaluated at the spoofed \ac{aoa} satisfies}
\begin{equation}
\blue{
    C(\overline{\bm{\theta}})
    =
    \left\|
    \mathbf{P}_{\mathbf{B}(\overline{\bm{\theta}})}^\perp
    (\delta\mathbf{y}+\mathbf{q})
    \right\|^2 ,
}
\label{eq:imperfect_csi_projected_cost}
\end{equation}
\blue{where $\delta\mathbf{y}=(\operatorname{diag}(\mathbf{h}\oslash\widehat{\mathbf{h}})-\mathbf{I})\overline{\mathbf{h}}$ is the signal perturbation induced by the \ac{csi} mismatch, and $\mathbf{q}$ is the receiver-noise vector in the effective \ac{simo} observation. For small relative channel errors, $\delta\mathbf{y}\approx-\operatorname{diag}(\overline{\mathbf{h}})(\delta\mathbf{h}\oslash\mathbf{h})$.}
\end{prop}
\begin{proof}
    \blue{See Appendix~\ref{app:imperfect_csi_aoa}.}
\end{proof}

\blue{
Proposition~\ref{prop:imperfect_csi_aoa} shows that imperfect \ac{csi} does not fundamentally change the oracle spoofing mechanism; rather, it introduces an additional residual term in the spoofed subspace test. 
Oracle spoofing, therefore, remains effective as long as this residual is small compared with the projected receiver noise. In practical terms, the required channel-gain accuracy becomes more stringent as the effective \ac{snr} increases.}

\subsection{Blind Spoofing}
We consider a scenario in which the \ac{ue} has no knowledge of the complex path gains, i.e., neither the amplitudes nor the phases of the multipath components are available. \blue{We use the term \emph{blind} to refer specifically to this absence of channel-gain (\ac{csi}) knowledge; the geometric parameters $\bm{\rho}$ are still assumed available, as discussed in Section~\ref{sec:spoofing_formulation}.} In this regime, although pilot design for exact parameter spoofing is not possible \black{for $L>1$}, it is still feasible to manipulate the observable geometric parameters even without  $\bm{\alpha}$ knowledge. 

\black{As in the oracle case, we proceed in a bottom-up fashion, starting from the simpler spoofing scenarios and ending up in the joint attack design.}
For angular spoofing, we exploit the property that, under a single-path assumption, the angle estimators are invariant to the unknown path gain. Thus, the \ac{ue} can construct pilot sequences that force the estimator to return a prescribed angle, regardless of the true gain. In the case of delay spoofing, however, a simple temporal shift is generally ineffective in multipath settings, as the position estimator relies on the \ac{tdoa} due to the unknown clock bias. To overcome this, we introduce a \emph{fake path injection} strategy, in which the \ac{ue} embeds additional delay components in the pilot structure to alter the output of the delay estimator and bias the \ac{tdoa}. The explicit construction of these spoofing pilots is detailed in the subsequent sections. 

\begin{figure*}[!ht]
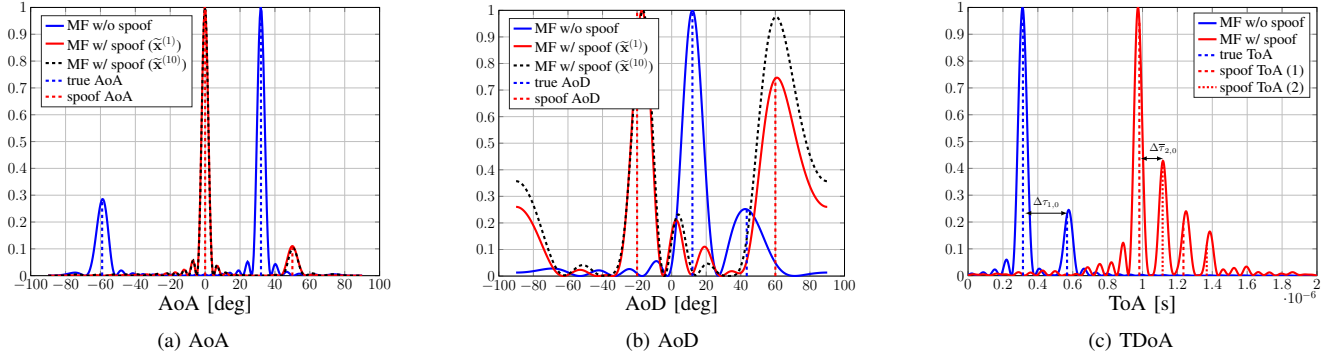

\centering
    \subfloat[AoA \label{fig:aoa_blind}]{  
    \resizebox{0.33\linewidth}{!}{\input{figures/aoa_blind.tex}}
    }
    \subfloat[AoD \label{fig:aod_blind}]{    
    \resizebox{0.33\linewidth}{!}{\input{figures/aod_blind.tex}}
    }
    \subfloat[TDoA \label{fig:tdoa_blind}]{       
    \resizebox{0.33\linewidth}{!}{\input{figures/tdoa_blind.tex}}
    }
    \caption{Blind spoofing: (a) AoA estimation with $N_{\mathrm{R}}^{}=M=24$, (b) AoD estimation with $N_{\mathrm{T}}^{}=S=8$, and (c) ToA estimation with \blue{$K=128$}.  \black{The solid blue lines show the MF estimations without spoofing. The red and black dashed lines correspond to the first and 10th iterations of the blind spoofing method, respectively. The dashed vertical lines indicate the true values (blue) and the spoofed values (red). The red dotted lines highlight the injected path pair in the \ac{tdoa} spoofing case.}
    }
    \label{fig:blind_spoof}
    \vspace{-10pt}
\end{figure*}


\subsubsection{AoA Spoofing} 

\black{In contrast to the case of perfect channel gain knowledge under oracle spoofing, perfect blind spoofing of the \ac{aoa} is generally not possible. The starting point is the fact that blind spoofing requires
\begin{align}
\mathcal{R}\!\big(\!\operatorname{diag}(\widetilde{\mathbf{x}})\mathbf B(\bm\theta)\big) \subseteq \mathcal{R}\!\big(\mathbf B(\overline{\bm\theta})\big).    \label{eq:blindAOAreq}
\end{align}

\begin{prop}
Blind perfect spoofing of the \ac{aoa} is generally possible when $L=1$ and when each element of $\mathbf{b}({\theta}_0^{})$ is non-zero, with 
    $\widetilde{\mathbf{x}} = \lambda \, \mathbf{b}(\overline{\theta}_0^{}) \oslash \mathbf{b}({\theta}_0^{})$, 
for a power control parameter $\lambda \in \mathbb{C}$. 
When $L>1$, blind perfect spoofing is generally impossible, i.e., there exists no non-zero $\mathbf{x}$ that satisfies \eqref{eq:blindAOAreq} for arbitrary $\overline{\bm\theta}$. 
\end{prop}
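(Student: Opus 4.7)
I would split the proof into the two halves. For the existence claim ($L=1$), my approach is a direct construction: when $L=1$ both $\mathbf{B}(\bm\theta)$ and $\mathbf{B}(\overline{\bm\theta})$ reduce to the single columns $\mathbf{b}(\theta_0^{})$ and $\mathbf{b}(\overline{\theta}_0^{})$, and \eqref{eq:blindAOAreq} collapses to the scalar-multiple requirement $\operatorname{diag}(\widetilde{\mathbf{x}})\,\mathbf{b}(\theta_0^{}) = \lambda\,\mathbf{b}(\overline{\theta}_0^{})$ for some $\lambda\in\mathbb{C}$. The non-vanishing hypothesis on the entries of $\mathbf{b}(\theta_0^{})$ makes the element-wise division well-defined, yielding the closed form $\widetilde{\mathbf{x}}=\lambda\,\mathbf{b}(\overline{\theta}_0^{})\oslash\mathbf{b}(\theta_0^{})$; a back-substitution then confirms \eqref{eq:blindAOAreq}, with $\lambda$ left free to meet the transmit-power constraint.

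For the impossibility claim ($L\geq 2$), my plan is to recast \eqref{eq:blindAOAreq} as a homogeneous linear system in $\widetilde{\mathbf{x}}$ and argue it admits only the trivial solution for typical $\overline{\bm\theta}$. Since column inclusion into $\mathcal{R}(\overline{\mathbf{B}})$ is equivalent to orthogonality with the complementary subspace, the condition becomes
\begin{equation*}
\mathbf{P}^{\perp}_{\overline{\mathbf{B}}}\operatorname{diag}\!\big(\mathbf{b}(\theta_\ell^{})\big)\,\widetilde{\mathbf{x}}=\mathbf{0},\qquad \ell=0,\dots,L-1,
\end{equation*}
with $\overline{\mathbf{B}}=\mathbf{B}(\overline{\bm\theta})$. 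Stacking these $L$ block-equations produces $\mathbf{A}\widetilde{\mathbf{x}}=\mathbf{0}$ with $\mathbf{A}\in\mathbb{C}^{LM\times M}$. A first rank count already shows the system is overdetermined: generically $\mathrm{rank}(\mathbf{P}^{\perp}_{\overline{\mathbf{B}}})=M-L$, so the $L$ blocks contribute up to $L(M-L)$ independent constraints on the $M$ unknowns, and $L(M-L)\geq M$ as soon as $L\geq 2$ and $M\geq L+1$.

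The delicate step, and the one I expect to be the main obstacle, is upgrading this counting argument to a genuine genericity statement, i.e., proving $\mathrm{rank}(\mathbf{A})=M$ for all angle configurations outside a proper algebraic subset. My plan is to note that every $M\times M$ minor of $\mathbf{A}$ is a polynomial in the variables $z_\ell^{}=e^{j\pi\sin\theta_\ell^{}}$ and $\bar z_\ell^{}=e^{j\pi\sin\overline{\theta}_\ell^{}}$, so it suffices to exhibit a single probe configuration at which some $M\times M$ minor is non-zero; non-vanishing of a polynomial at one point forces non-vanishing on a Zariski-open (hence topologically dense, full-measure) subset. A natural probe is to choose $\overline{\bm\theta}$ so that $\overline{\mathbf{B}}$ becomes a partial Fourier submatrix and each $\operatorname{diag}(\mathbf{b}(\theta_\ell^{}))$ acts as a cyclic shift; then the rank verification reduces to the well-known independence of distinct shifts of Vandermonde vectors. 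Pinning down that this probe is compatible with the combiner factor embedded in $\mathbf{B}=\mathbf{W}^{\mathsf{H}}\mathbf{A}_{N_{\mathrm{R}}^{}}$ is the main technical hurdle; once it is verified, the rest of the argument is routine.
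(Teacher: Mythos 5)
Your proof follows essentially the same route as the paper's: for $L=1$ the identical element-wise division construction, and for $L>1$ a dimension count on a homogeneous linear system. The only structural difference is bookkeeping: the paper keeps the mixing matrix $\bm{\Omega}$ as explicit unknowns (so $ML$ equations in $M+L^2$ unknowns, nontrivial solutions guaranteed only when $M<L^2/(L-1)$), whereas you eliminate $\bm{\Omega}$ via the orthogonal projector and count $L(M-L)$ effective constraints on $M$ unknowns --- these give the same threshold. Two remarks. First, your sufficient condition is off by one: $L(M-L)\ge M$ is equivalent to $M(L-1)\ge L^2$, which fails for $M=L+1$ (e.g., $L=2$, $M=3$ gives $2<3$); you need $M\ge L^2/(L-1)$, i.e., $M\ge L+2$ for integers, matching the paper's bound. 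Second, you correctly identify that the counting argument alone only shows the system is overdetermined and that a genuine genericity (full-rank) statement is needed; your proposed Zariski-open/probe-configuration argument would supply this but is left unfinished. Be aware that the paper does not close this gap either --- it simply asserts that for $M\ge L^2/(L-1)$ nontrivial solutions exist ``only in very particular cases'' such as $\overline{\bm\theta}$ being a permutation of $\bm\theta$ --- so your sketch, if completed, would actually be more rigorous than the published proof, but as written it is at the same level of informality.
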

\begin{proof}
    See Appendix \ref{app:blindAOA}.
    \vspace{-6pt}
\end{proof}
Note that the case $L=1$  
was first derived and discussed in \cite{srinivasan2024aoa}.} 
\black{To allow for non-perfect blind spoofing, we can design $\widetilde{\mathbf{x}}$ to minimize the projector residual, i.e., 
\begin{align}
\min_{\widetilde{\mathbf{x}}}\ 
\big\|\big(\mathbf I_M^{} - \mathbf P_{\mathbf B(\overline{\bm\theta})}\big)\,\operatorname{diag}(\widetilde{\mathbf{x}})\,\mathbf B(\bm\theta)\big\|^2,
\label{eq:AoALSsurrogate}
\end{align}
subject to a constraint on $\Vert\mathbf{x}\Vert$. 
Problem \eqref{eq:AoALSsurrogate} is a 
\ac{qcqp}. 
To obtain a solution with low complexity, we introduce an auxiliary mixing matrix $\bm{\Omega} \in \mathbb{C}^{L\times L}$, which makes the problem become bilinear. Specifically, we express \eqref{eq:AoALSsurrogate} as 
\begin{align}
\min_{\widetilde{\mathbf{x}}, \bm{\Omega}}\ 
\big\|\mathbf B(\overline{\bm\theta})\bm{\Omega}-\,\operatorname{diag}(\widetilde{\mathbf{x}})\,\mathbf B(\bm\theta)\big\|^2,
\label{eq:AoALSsurrogate2}
\end{align}
which can be solved for $\widetilde{\mathbf{x}}$ and $\bm{\Omega}$ in an alternating fashion. At iteration $t$, given $\bm{\Omega}^{(t)}$, $\widetilde{\mathbf{x}}^{(t)}$ is found by solving a row-wise \ac{ls} problem:  let $\mathbf b_{m}^\mathsf{T}$ denote the $m$-th row of $\mathbf B(\bm\theta)$ and $\mathbf b^{\mathsf{T}}_{\Omega,m}$  the $m$-th row of $\mathbf B(\overline{\bm\theta})\bm{\Omega}^{(t)}$, the update is $x_m^{(t)} = \lambda (\mathbf b^{\mathsf{T}}_{\Omega,m}\,\mathbf b_{m}^*) / \Vert \mathbf b_{m}^{}\Vert^2 $, where $\lambda >0$ ensures the power constraint. Given $\widetilde{\mathbf{x}}^{(t)}$, we solve \eqref{eq:AoALSsurrogate2} for $\bm{\Omega}$ as $\bm{\Omega}^{(t+1)} = \big(\mathbf B^{\mathsf{H}}(\overline{\bm\theta})\mathbf B(\overline{\bm\theta})\big)^{-1}\mathbf B^{\mathsf{H}}(\overline{\bm\theta})\operatorname{diag}(\widetilde{\mathbf{x}}^{(t)})\mathbf B(\bm\theta)$.
The entire procedure can be initialized by  $\bm{\Omega}^{(0)}=\mathbf{I}_{L}^{}$.}
\black{Although global optimality is not guaranteed, the iterations provide effective approximate spoofing with low complexity~\cite{ao-ref}.

Fig.~\ref{fig:aoa_blind} gives an illustrative example of \ac{aoa} spoofing for $L=2$, using an \ac{mf}-based estimation of the channel parameters. We observe that the angles can be spoofed, though we have no control over the relative path amplitudes. We also want to highlight that, after ten iterations of the proposed method (black dashed line), the spoofing performance does not improve compared to the first iteration.}

\subsubsection{AoD Spoofing}
\black{Due to the symmetry between the \ac{aoa} model \eqref{eq:simo_model} and the \ac{aod} model \eqref{eq:miso_model}, the same reasoning applies. 
Perfect spoofing is possible only for $L=1$, with pilot
$\widetilde{\mathbf{x}} = \lambda \,\mathbf c(\overline{\phi}_0)\oslash \mathbf c(\phi_0)$. 
For $L\ge2$, blind perfect spoofing is generally impossible, and the attacker must instead resort to approximate spoofing, as in the \ac{aoa} case.
\blue{Fig.~\ref{fig:aod_blind}} gives an illustrative example of \ac{aod} spoofing for $L=2$, using an \ac{mf}-based estimation of the channel parameters. Again, we observe that the angles can be spoofed well, though we have no control over the relative path amplitudes.} 

\subsubsection{TDoA Spoofing}
{For blind \ac{tdoa} spoofing, perfect spoofing requires $\widetilde{\mathbf{x}}$
such that 
\begin{equation}
\text{diag}(\widetilde{\mathbf{x}})\mathbf{D}(\bm{\tau})\bm{\alpha}=\mathbf{D}(\overline{\bm{\tau}})\bm{\beta}
\end{equation}
for some $\bm{\beta}\in\mathbb{C}^{L}$. 
\begin{prop}
    Achieving a perfect spoofing on a target \ac{toa} vector $\overline{\bm{\tau}}$ in absence of knowledge $\bm{\alpha}$ has no solution, except when: (i) $L=1$, or (ii) $\overline{\bm{\tau}}=\bm{\tau}+\Delta^{}_\tau\mathbf{1}^{}_L$, with $\Delta^{}_\tau$ a delay shift. In both cases, the \ac{ue} designs pilot $\widetilde{\mathbf{x}}$ in \eqref{eq:toa_model} as
    $\widetilde{\mathbf{x}}=\lambda\,\mathbf{d}(\Delta^{}_\tau)$, 
    where $\lambda \in \mathbb{C}$ is a scalar design parameter.
    \label{prop:toa_spoof}
\end{prop}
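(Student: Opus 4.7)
The plan is to recast the blind perfect-spoofing requirement as a subspace-inclusion condition on the frequency-domain pilot, exploit the linear independence of complex exponentials with distinct frequencies, and then use a pigeonhole/shift-invariance argument to conclude.

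Since the equality $\operatorname{diag}(\widetilde{\mathbf{x}})\mathbf{D}(\bm{\tau})\bm{\alpha}=\mathbf{D}(\overline{\bm{\tau}})\bm{\beta}$ must hold for every $\bm{\alpha}$, the first step is to observe that this is equivalent to $\mathcal{R}(\operatorname{diag}(\widetilde{\mathbf{x}})\mathbf{D}(\bm{\tau}))\subseteq \mathcal{R}(\mathbf{D}(\overline{\bm{\tau}}))$, i.e., there exists $\mathbf{M}\in\mathbb{C}^{L\times L}$ with $\operatorname{diag}(\widetilde{\mathbf{x}})\mathbf{D}(\bm{\tau})=\mathbf{D}(\overline{\bm{\tau}})\mathbf{M}$. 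Writing this entrywise, for every $\ell\in\{1,\dots,L\}$ and every $k\in\{0,\dots,K-1\}$,
\begin{equation}
\widetilde{x}_k \;=\; \sum_{\ell'=1}^{L} M_{\ell',\ell}\, e^{-j2\pi k \Delta f\,(\overline{\tau}_{\ell'}-\tau_\ell)}.
\label{eq:plan-key}
\end{equation}
Crucially, the left-hand side does not depend on $\ell$, so all $L$ right-hand sides must coincide as functions of $k$.

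Next, I would invoke the fact that complex exponentials $\{e^{-j2\pi k\Delta f\nu}\}_{k=0}^{K-1}$ at distinct frequencies $\nu$ are linearly independent whenever $K$ is at least twice the number of frequencies involved; this lets me match the two expressions in \eqref{eq:plan-key} for $\ell=1$ and any $\ell\neq 1$ frequency-by-frequency. Setting $\Delta_\ell:=\tau_1-\tau_\ell$, I can then ask: for a nontrivial column $\mathbf{m}_{:\ell}$ of $\mathbf{M}$, when can the shifted frequency set $\{\overline{\tau}_{\ell'}-\tau_\ell\}_{\ell'}$ (restricted to the support of that column) coincide with $\{\overline{\tau}_{\ell'}-\tau_1\}_{\ell'}$ (restricted to its support), with matching coefficients? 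Summing the common frequencies over the support yields a linear relation of the form $\sum_{\ell'\in S}\overline{\tau}_{\sigma(\ell')}-\sum_{\ell'\in S}\overline{\tau}_{\ell'}=|S|\,\Delta_\ell$, where $\sigma$ is a permutation of $S$. Since the left-hand side vanishes whereas $|S|\,\Delta_\ell$ does not unless $\tau_\ell=\tau_1$, the only way to satisfy \eqref{eq:plan-key} nontrivially for all $\ell$ is that either (a) $L=1$ so no matching across $\ell$ is required, or (b) the full support $\{1,\dots,L\}$ is preserved by all shifts $\Delta_\ell$, which forces $\overline{\tau}_{\ell'}-\tau_\ell$ to be the same constant for every $\ell$—i.e., $\overline{\bm{\tau}}=\bm{\tau}+\Delta_\tau\mathbf{1}_L$, realized by $\mathbf{M}=\lambda\mathbf{I}_L$ and $\widetilde{\mathbf{x}}=\lambda\mathbf{d}(\Delta_\tau)$.

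Finally, to close the proof I would verify the two constructive cases: for $L=1$, pick any $\Delta_\tau=\overline{\tau}_1-\tau_1$ and $\widetilde{\mathbf{x}}=\lambda\mathbf{d}(\Delta_\tau)$; for the uniform-shift case, the same pilot yields $\widetilde{\mathbf{x}}\odot\mathbf{d}(\tau_\ell)=\mathbf{d}(\overline{\tau}_\ell)$ path-by-path, so \eqref{eq:plan-key} holds with $\mathbf{M}=\lambda\mathbf{I}_L$. The scalar $\lambda$ is then chosen to meet the pilot power budget. The main obstacle I anticipate is the bookkeeping of the frequency-matching step when spoofed and true delays are allowed to be non-distinct or to partially overlap: degenerate configurations could create spurious matchings that must be ruled out by a genericity assumption on $\bm{\tau}$ and $\overline{\bm{\tau}}$, or by explicitly treating coincidences as limiting cases of the generic argument.
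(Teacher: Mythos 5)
Your overall strategy is the same as the paper's: because perfect spoofing must hold for every $\bm{\alpha}$, the requirement reduces to a per-path (column-wise) range-inclusion condition, and the resulting per-path expressions for the pilot must coincide, which by linear independence of the delay steering vectors forces either $L=1$ or a common delay shift. The paper implements exactly this for $L=2$ by setting $\alpha_1^{}=0$ and $\alpha_0^{}=0$ in turn and equating the two candidate pilots $\widetilde{\mathbf{x}}_0^{}$ and $\widetilde{\mathbf{x}}_1^{}$; your mixing-matrix formulation $\operatorname{diag}(\widetilde{\mathbf{x}})\mathbf{D}(\bm{\tau})=\mathbf{D}(\overline{\bm{\tau}})\mathbf{M}$ is the column-wise version of the same step and extends it to general $L$, which is a modest but genuine strengthening over the paper's two-path treatment.

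One caution on your frequency-matching step: as written, the branch in which $\sigma$ is a permutation of a \emph{common} support $S$ gives $0=|S|\,\Delta_\ell$ and hence $\tau_\ell^{}=\tau_1^{}$, i.e., that branch yields \emph{no} admissible solution for $\ell\neq 1$, not the uniform-shift solution. The configuration that actually survives is the one where the supports $S_\ell$ are \emph{distinct} singletons paired by the shift ($\mathbf{M}$ a scaled identity, up to a relabeling of the spoofed delays), so that $\overline{\tau}_{\ell}^{}-\tau_\ell^{}$ is the same constant for all $\ell$. Your case (b), which speaks of the full support being ``preserved by all shifts,'' conflates these two situations, and as stated the case analysis contradicts its own conclusion; the fix is to observe that each target set $\{\overline{\tau}_{j}:j\in S_\ell\}$ must be a translate of $\{\overline{\tau}_{j}:j\in S_1\}$ by $-\Delta_\ell$ inside a fixed $L$-element set, which for distinct true delays forces singleton supports and the diagonal pairing. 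This is the same bookkeeping the paper also leaves implicit, so it is a gap in the write-up rather than in the idea, but it should be repaired before the sketch can stand as a proof.
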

\begin{proof}
    See Appendix \ref{app:tdoa_blind}.
\end{proof}
However, the spoofing
$\overline{\bm{\tau}}=\bm{\tau}+\Delta^{}_\tau\mathbf{1}^{}_L$ is not of interest,
since the solution to \eqref{eq:d_los} is invariant to the value of $\Delta^{}_\tau$.} To enable non-coherent spoofing of the \ac{tdoa} estimator in multipath, we propose a \emph{fake path injection} strategy {inspired by \cite{li2024channel}}. 

\paragraph{Fake Path Injection}
We focus first on the case $L=2$, i.e., the \ac{los} path and one \ac{nlos} path.  
The \ac{ue} synthesizes the transmitted pilot as a linear combination of multiple delayed steering vectors, generating a received signal equivalent to the superposition of four delays. Explicitly, we construct
\begin{align}
    \widetilde{\mathbf{x}} = \lambda_0^{} \mathbf{d}(\Delta_{\tau_0^{}}^{}) + \lambda_1^{} \mathbf{d}(\Delta_{\tau_2^{}}^{}),
\end{align}
where $\Delta_{\tau_0^{}}^{}=\overline{\tau}_0^{}-\tau_0^{}$ and $\Delta_{\tau_2^{}}^{}=\overline{\tau}_2^{}-\tau_0^{}$. This injected signal produces the equivalent of four observed delays at the receiver, namely $\overline{\tau}_{0}^{}$, $\overline{\tau}_{1}^{}=\overline{\tau}_{0}^{}+(\tau_1^{} - \tau_0^{})$, $\overline{\tau}_{2}^{}$, and $\overline{\tau}_{3}^{} = \overline{\tau}_{2}^{} + (\tau_1^{} -\tau_0^{})$, with respective complex channel gains $\alpha_0^{} \lambda_0^{}$, $\alpha_1^{} \lambda_0^{}$, $\alpha_0^{} \lambda_1^{}$, and $\alpha_1^{} \lambda_1^{}$. 
\black{To ensure that the \ac{tdoa} estimator infers the desired delay difference when the two smallest \acp{toa} are used for positioning, we require}
\begin{align}
    \black{\vert \Delta^{}_{\tau^{}_2}-\Delta^{}_{\tau^{}_0} \vert= \vert \overline{\tau}_2^{} - \overline{\tau}_0^{} \vert < \tau_1^{} - \tau_0^{}.}
\end{align}
Introducing $\Delta\overline{\tau}_{2,0}^{} = \vert \overline{\tau}_2^{} - \overline{\tau}_0^{} \vert$ and $\Delta\tau_{1,0}^{} = \tau_1^{} - \tau_0^{}$, this condition becomes 
   $\Delta\overline{\tau}_{2,0}^{} < \Delta\tau_{1,0}^{}$.

{Fig.~\ref{fig:tdoa_blind} illustrates the fake path injection approach with $L=2$, where a second pair of the original path (indicated with dotted red lines) is added in between the two paths of the first pair (indicated with dashed red lines), to modify the estimated \ac{tdoa} (shown with horizontal arrows).}


\paragraph{Generalization to $L$ Paths}
\blue{For the case of $L$ true paths, blind \ac{tdoa} spoofing via fake path injection can be described by choosing $L_s^{}$ injected delay shifts, with $L_s^{} \in \mathbb{N}$.}
\blue{The \ac{ue} transmits a pilot sequence of the form}
\begin{equation}\label{eq:tdoa_ls_spoof}
    \blue{\widetilde{\mathbf{x}} = \sum_{r=0}^{L_s^{}-1} \lambda_r^{}\, \mathbf{d}(\Delta_{\tau_r^{}}^{})},
\end{equation}
\blue{where $\{\Delta_{\tau_r^{}}^{}\}_{r=0}^{L_s^{}-1}$ are the injected delay shifts and $\{\lambda_r^{}\}_{r=0}^{L_s^{}-1}$ are design variables for the synthetic path amplitudes. Each true path $i\in\{0,\ldots,L-1\}$ then produces $L_s^{}$ candidate delays}
\begin{align}
    \blue{\overline{\tau}_{i,r}^{} = \tau_i^{}+\Delta_{\tau_r^{}}^{}, \qquad r=0,\ldots,L_s^{}-1,}
\end{align}
\blue{for a total of $L L_s^{}$ candidate delay peaks. Candidate \ac{tdoa} values are obtained from differences}
\begin{align}
\blue{
\Delta\overline{\tau}_{i,j}=
\overline{\tau}_{i,r}^{}-\overline{\tau}_{j,r'}^{}
=
(\tau_i^{}-\tau_j^{})+
(\Delta_{\tau_r^{}}^{}-\Delta_{\tau_{r'}^{}}^{}).
}
\end{align}
\blue{By choosing the injected shifts, the attacker can bias this candidate delay-pair set toward desired pairwise \ac{tdoa} values. This manipulation is conditional on the injected paths being detectable and on the estimator selecting the engineered pair, e.g., through a shortest-path or similar path-selection rule.} 
\blue{In summary, the \emph{fake path injection} generalizes as follows: \textit{(i)} choose $L_s^{}$ injected shifts, which generate $L L_s^{}$ candidate delay peaks from the $L$ true paths; \textit{(ii)} design the shifts so that the intended perceived \ac{tdoa} values are present in the estimator's candidate set and can be selected under the assumed path-selection rule; \textit{(iii)} select amplitudes $\{\lambda_r^{}\}_{r=0}^{L_s^{}-1}$ for energy normalization or further control.}

\subsubsection{Joint AoA/AoD Spoofing}
In the special case of a single path ($L=1$), we have:
\begin{align}
\mathbf{Y}_{\text{tm}}^{}  & =\alpha_0^{}\,\mathbf{W}^{\mathsf{H}}\mathbf{a}_{N_{\text{R}}^{}}(\theta_0^{}) \mathbf{a}_{N_{\text{T}}^{}}^{\mathsf{T}}(\phi_0^{})\mathbf{F}^\ast \odot \mathbf{X}+\mathbf{Q},
\end{align}
which can be vectorized to
\begin{align}
\mathbf{y}_{\text{tm}}^{}  & =\mathrm{vec}\!\left(\alpha_0^{}\,\mathbf{W}^{\mathsf{H}}\mathbf{a}_{N_{\text{R}}^{}}(\theta_0^{}) \mathbf{a}_{N_{\text{T}}^{}}^{\mathsf{T}}(\phi_0^{})\mathbf{F}^\ast \odot \mathbf{X}+\mathbf{Q}\right).
\end{align}
Let $\mathbf{b}=\mathbf{W}^{\mathsf{H}}\mathbf{a}_{N_{\text{R}}^{}}$, $\mathbf{c}=\mathbf{F}^{\mathsf{H}}\mathbf{a}_{N_{\text{T}}^{}}$, and $\mathbf{z}=\mathbf{c}\otimes\mathbf{b}$, it follows that 
\begin{align}
\mathbf{y}_{\text{tm}}^{}  & =\alpha_0^{}\,\mathbf{z}(\theta_0^{},\phi_0^{})\odot \mathrm{vec}\!\left(\mathbf{X}\right) +\mathbf{q}.
\end{align}
Given a modified pilot signal $\widetilde{\mathbf{X}}$, the cost function becomes
\begin{align}
C(\overline{\theta}_0^{},\overline{\phi}_0^{}; \widetilde{\mathbf{X}}) 
= \left\Vert  \mathbf{P}^{\perp}_{\mathbf{z}(\overline{\theta}_0^{},\overline{\phi}_0^{})}\, \big(\mathbf{z}(\theta_0^{},\phi_0^{}) \odot \mathrm{vec}(\widetilde{\mathbf{X}})\,\alpha_0^{} \big) \right\Vert^2.
\end{align}
\black{Perfect spoofing ($C=0$) is achieved if
\(\mathbf{z}(\theta_0^{},\phi_0^{}) \odot \mathrm{vec}(\widetilde{\mathbf{X}}) \in \mathcal{R}\left( \mathbf{z}(\overline{\theta}_0^{},\overline{\phi}_0^{}) \right)\),
which yields
\begin{equation}
\mathrm{vec}(\widetilde{\mathbf{X}}) = \lambda \, \mathbf{z}(\overline{\theta}_0^{},\overline{\phi}_0^{})\oslash\mathbf{z}(\theta_0^{},\phi_0^{}),
\end{equation}
\blue{provided that $\mathbf{z}(\theta_0^{},\phi_0^{})$ has nonzero entries.}
For $L \ge 2$ paths with unknown gains, perfect spoofing condition would require subspace inclusion $\mathcal R(\operatorname{diag}(\mathrm{vec}(\widetilde{\mathbf{X}}))\,\mathbf Z(\bm\theta,\bm\phi)) \subseteq \mathcal R(\mathbf Z(\overline{\bm\theta},\overline{\bm\phi}))$\blue{, with $\mathbf{Z}(\bm\vartheta, \bm\varphi) = \mathbf{B}(\bm\vartheta) \circledast \mathbf{C}(\bm\varphi)$}, which is in general not achievable, as a single diagonal scaling cannot satisfy the subspace relation for multiple independent paths. 
Approximate spoofing, as in the \ac{aoa} case, can be applied. \blue{For instance, setting $\bm{\Omega} = \mathbf I_{L}^{}$, yields} 
\begin{equation}
\widetilde{\mathbf{X}}
= (\mathbf B(\overline{\bm\theta}) \bm\Gamma \mathbf C^\mathsf{T}(\overline{\bm\phi}))
   \oslash (\mathbf B(\bm\theta) \mathbf C^\mathsf{T}(\bm\phi)),
   \label{eq:blind_spatial}
\end{equation}
with $\bm\Gamma=\operatorname{diag}(\bm\lambda)$. This value of $\widetilde{\mathbf{X}}$ can then be used to refine $\bm{\Omega}$ as before.}

\subsubsection{Joint AoA/AoD/TDoA Spoofing}
\label{sec:blind}
From the tensor channel model~\eqref{eq:def_tensor_H}, the vectorized received signal is
\begin{align}
    \mathbf{y}_{\text{tm}} &=\left(\mathbf{B} \circledast \mathbf{C} \circledast \mathbf{D}\right)\bm{\alpha} \odot \mathrm{vec}\!\left(\mathbf{X}\right) \nonumber \\
    &=\sum^{L-1}_{\ell=0}\alpha_\ell^{}\left(\mathbf{b}_\ell^{} \otimes \mathbf{c}_\ell^{} \otimes \mathbf{d}_\ell^{}\right) \odot \mathrm{vec}\!\left(\mathbf{X}\right),
\end{align}
where $\mathbf{b}_\ell^{}$, $\mathbf{c}_\ell^{}$, $\mathbf{d}_\ell^{}$ are the $\ell$-th columns of $\mathbf{B}$, $\mathbf{C}$, $\mathbf{D}$, respectively. 

\begin{prop}[\blue{Blind Multi-Domain HoloTrace Design}]
\blue{Consider a \ac{mimo}-\ac{ofdm} system} with $L$ propagation paths. The received signal is modeled as
\begin{equation}
   \mathbf{y}_{\textnormal{tm}}^{}  = \sum_{\ell=0}^{L-1} \alpha_\ell^{} \left( \mathbf{b}_\ell^{} \otimes \mathbf{c}_\ell^{} \otimes \mathbf{d}_\ell^{} \right) \odot \operatorname{vec}(\mathbf{X}). 
\end{equation}
If the spoofed pilot is constructed as\footnote{\blue{Note that an alternative solution is 
$
\mathrm{vec}(\widetilde{\mathbf{X}}) = \mathrm{vec}(\widetilde{\mathbf{X}}_\mathrm{bc}) \otimes  \widetilde{\mathbf{x}}_\mathrm{d},
$
which decouples the spatial (with $\widetilde{\mathbf{X}}_\mathrm{bc}$ from \eqref{eq:blind_spatial}) and time domain manipulation.}}
\begin{equation}
\operatorname{vec}(\widetilde{\mathbf{X}}) =  \widetilde{\mathbf{x}}_{\mathrm{b}} \otimes \widetilde{\mathbf{x}}_{\mathrm{c}} \otimes \widetilde{\mathbf{x}}_{\mathrm{d}},
\end{equation}
with spoofing vectors designed independently to manipulate \ac{aoa}, \ac{aod}, and delay dimensions, then the spoofed signal is
\begin{equation}
\mathbf{y}_{\textnormal{tm}}^{}  = \sum_{\ell=0}^{L-1} \alpha_\ell^{} 
\left( \mathbf{b}_\ell^{} \odot \widetilde{\mathbf{x}}_{\mathrm{b}} \right) \otimes 
\left( \mathbf{c}_\ell^{} \odot \widetilde{\mathbf{x}}_{\mathrm{c}} \right) \otimes 
\left( \mathbf{d}_\ell^{} \odot \widetilde{\mathbf{x}}_{\mathrm{d}} \right).
\end{equation}
\end{prop}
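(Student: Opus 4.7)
The plan is to reduce the claim to a single, well-known algebraic identity between the Kronecker and Hadamard products applied term by term to the path decomposition of $\mathbf{y}_{\textnormal{tm}}^{}$. Concretely, the key identity I will invoke (and briefly justify componentwise) is
\begin{equation}
(\mathbf{u}_1^{}\otimes\mathbf{u}_2^{}\otimes\mathbf{u}_3^{})\odot(\mathbf{v}_1^{}\otimes\mathbf{v}_2^{}\otimes\mathbf{v}_3^{})
=(\mathbf{u}_1^{}\odot\mathbf{v}_1^{})\otimes(\mathbf{u}_2^{}\odot\mathbf{v}_2^{})\otimes(\mathbf{u}_3^{}\odot\mathbf{v}_3^{}),
\end{equation}
which follows directly from the fact that the $(i,j,k)$-th component of each side equals $u_{1,i}^{}v_{1,i}^{}\cdot u_{2,j}^{}v_{2,j}^{}\cdot u_{3,k}^{}v_{3,k}^{}$ under the standard lexicographic indexing of a triple Kronecker product. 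This is the single non-trivial ingredient and, once established, the rest of the proof is mechanical.

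First I would substitute the proposed Kronecker-structured pilot $\operatorname{vec}(\widetilde{\mathbf{X}})=\widetilde{\mathbf{x}}_{\mathrm{b}}\otimes\widetilde{\mathbf{x}}_{\mathrm{c}}\otimes\widetilde{\mathbf{x}}_{\mathrm{d}}$ into the expression for the vectorized received signal inherited from the tensor channel model, replacing $\operatorname{vec}(\mathbf{X})$ inside the Hadamard product. Next, I would exchange the order of the path sum and the Hadamard product (by bilinearity of $\odot$), so that the identity above applies path by path, with $\mathbf{u}_1^{},\mathbf{u}_2^{},\mathbf{u}_3^{}$ taken as $\mathbf{b}_\ell^{},\mathbf{c}_\ell^{},\mathbf{d}_\ell^{}$ and $\mathbf{v}_1^{},\mathbf{v}_2^{},\mathbf{v}_3^{}$ as $\widetilde{\mathbf{x}}_{\mathrm{b}},\widetilde{\mathbf{x}}_{\mathrm{c}},\widetilde{\mathbf{x}}_{\mathrm{d}}$. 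Applying the identity converts each summand into the desired separable Kronecker form, and re-summing over $\ell$ yields exactly the claimed expression for $\mathbf{y}_{\textnormal{tm}}^{}$.

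I would then add a short remark that, because $\widetilde{\mathbf{x}}_{\mathrm{b}}$, $\widetilde{\mathbf{x}}_{\mathrm{c}}$, $\widetilde{\mathbf{x}}_{\mathrm{d}}$ enter multiplicatively only in their respective Kronecker factors, the \ac{aoa}-, \ac{aod}-, and delay-dependent subspaces are deformed independently, which is what justifies calling the design "decoupled" and legitimizes reusing the per-dimension blind solutions derived earlier (\ac{aoa} and \ac{aod} alternating designs, and fake path injection for \ac{tdoa}) for each factor separately. This remark is not strictly part of the proof, but it connects the algebraic identity to the operational interpretation used in the rest of the section.

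The hardest part, honestly, is presentation rather than mathematics: one must be careful that the indexing convention for $\operatorname{vec}(\mathbf{X})$ used earlier in~\eqref{eq:def_ten_Y} (tensor unfolding along the third mode followed by vectorization) is consistent with the Kronecker ordering $\widetilde{\mathbf{x}}_{\mathrm{b}}\otimes\widetilde{\mathbf{x}}_{\mathrm{c}}\otimes\widetilde{\mathbf{x}}_{\mathrm{d}}$ assumed in the ansatz. If the unfolding order is different, one must either relabel the spoofing factors or permute the Kronecker ordering accordingly; this is the only place where an implicit choice could invalidate the derivation, so I would state the convention explicitly at the start of the proof and check that the factor $\mathbf{b}_\ell^{}\otimes\mathbf{c}_\ell^{}\otimes\mathbf{d}_\ell^{}$ inherited from $\mathbf{B}\circledast\mathbf{C}\circledast\mathbf{D}$ lives in the same ordering before invoking the identity.
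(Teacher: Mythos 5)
Your proposal is correct and follows essentially the same route as the paper's proof: substitute the Kronecker-structured pilot into the path-sum decomposition and apply the identity $(\mathbf{a}_1^{}\otimes\mathbf{a}_2^{}\otimes\mathbf{a}_3^{})\odot(\mathbf{b}_1^{}\otimes\mathbf{b}_2^{}\otimes\mathbf{b}_3^{})=(\mathbf{a}_1^{}\odot\mathbf{b}_1^{})\otimes(\mathbf{a}_2^{}\odot\mathbf{b}_2^{})\otimes(\mathbf{a}_3^{}\odot\mathbf{b}_3^{})$ term by term. Your additional care about the consistency of the unfolding/vectorization ordering is a sensible refinement that the paper leaves implicit, but it does not change the argument.
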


\begin{proof}

The received signal under pilot \( \widetilde{\mathbf{X}} \) is:
\begin{align}
\mathbf{y}_{\text{tm}}^{}  = \sum_{\ell=0}^{L-1} \alpha_\ell \left( \mathbf{b}_\ell \otimes \mathbf{c}_\ell \otimes \mathbf{d}_\ell \right) \odot \operatorname{vec}(\widetilde{\mathbf{X}}).
\end{align}
Substituting the spoofed pilot $\operatorname{vec}(\widetilde{\mathbf{X}})$ 
and using the identity
\[
(\mathbf{a}_1^{} \otimes \mathbf{a}_2^{} \otimes \mathbf{a}_3^{}) \odot (\mathbf{b}_1^{} \otimes \mathbf{b}_2^{} \otimes \mathbf{b}_3^{})\!=\! (\mathbf{a}_1^{} \odot \mathbf{b}_1^{}) \otimes (\mathbf{a}_2^{} \odot \mathbf{b}_2^{}) \otimes (\mathbf{a}_3^{} \odot \mathbf{b}_3^{}),
\]
we obtain:
\[
\mathbf{y}_{\text{tm}}^{}  = \sum_{\ell=0}^{L-1} \alpha_\ell^{} 
\left( \mathbf{b}_\ell^{} \odot \widetilde{\mathbf{x}}_{\mathrm{b}}^{} \right) \otimes 
\left( \mathbf{c}_\ell^{} \odot \widetilde{\mathbf{x}}_{\mathrm{c}}^{} \right) \otimes 
\left( \mathbf{d}_\ell^{} \odot \widetilde{\mathbf{x}}_{\mathrm{d}}^{} \right).
\]
This shows that spoofing acts independently along each tensor dimension, and the resulting signal preserves the separable structure. \blue{The remaining claims on lack of association control and potential path multiplication follow from the construction and prior analysis directly.}
\end{proof}
\blue{Proposition~6 should be interpreted as a separable blind pilot construction rather than as a guarantee of exact geometric spoofing in multipath. The Kronecker structure allows the \ac{ue} to perturb the \ac{aoa}, \ac{aod}, and delay dimensions independently and simultaneously without path-gain knowledge. However, in multipath channels the resulting estimator output may still suffer from path association errors, power imbalance, or injected-path selection effects; consequently, the construction provides approximate spoofing or obfuscation unless the induced paths are associated with the intended geometry.}

\definecolor{color_spoof}{rgb}{0.85098,0.32549,0.09804}
\definecolor{color_aob}{rgb}{0.49412,0.18431,0.55686}
\definecolor{color_bht}{rgb}{0.46667,0.67451,0.18824}
\definecolor{color_dais}{rgb}{0.92941,0.69412,0.12549}
\definecolor{color_clean}{rgb}{0.00000,0.44700,0.74100}

\section{Performance Evaluation}
\label{sec:results}
\subsection{Simulation Setup}
Using the \ac{bs} as a reference point for our local coordinates, we placed the \ac{bs} in 
\(\mathbf{p}_\mathrm{BS}^{}=\left[0 \; 0 \right]^\mathsf{T}\mathrm{m}\) with \(o_\mathrm{BS}^{}=0 \,\mathrm{rad}\), 
the \ac{ue} in \(\mathbf{p}_\mathrm{UE}^{}=\left[10\; 5\right]^\mathsf{T}\mathrm{m}\) with \(o_\mathrm{UE}^{}=\blue{\text{-}\frac{2}{3}\pi\,\mathrm{rad}}\), 
and one \ac{sp} 
in \(\mathbf{p}_1^{}=\left[7\;\text{-}15\right]^\mathsf{T}\mathrm{m}\). 
The channel in \eqref{eq:channel} and the received signal in \eqref{eq:rx_signal} are modeled with the following \black{3GPP standard parameters \cite{tr138901}}: $N_{\mathrm{R}}^{}=24$, $N_{\mathrm{T}}^{}=16$, $L=2$, $M=24$, $S=16$, $K=3300$, $\Delta f=120$~kHz, with a bandwidth of $\mathrm{BW}=K\Delta f=396$~MHz, and $f_c^{}=27.8$~GHz. With the given positions, the true measurements are $\bm{\theta}=\left[0.46\; \text{-}1.13\right]^\mathsf{T}\mathrm{rad}$, $\bm{\phi}=\left[\text{-}0.58\; 0.37\right]^\mathsf{T}\mathrm{rad}$, and $\bm{\tau}=c^{-1}\left[11.18\; 36.78\right]^\mathsf{T}\mathrm{s}$.
The spoofed position is chosen at \(\overline{\mathbf{p}}_\mathrm{UE}^{}=\left[30\; \text{-}20\right]^\mathsf{T}\mathrm{m}\) with \(\overline{o}_\mathrm{UE}^{}=\pi\,\mathrm{rad}\), and the \ac{sp} in \(\overline{\mathbf{p}}_1^{}=\left[40\; \text{-}10\right]^\mathsf{T}\mathrm{m}\). 
The respective spoofing measurements are $\overline{\bm{\theta}}=\left[\text{-}0.59\; \text{-}0.24\right]^\mathsf{T}\mathrm{rad}$, $\overline{\bm{\phi}}=\left[0.98\; \text{-}0.79\right]^\mathsf{T}\mathrm{rad}$, and $\overline{\bm{\tau}}=c^{-1}\left[36.06\; 55.37\right]^\mathsf{T}\mathrm{s}$.

The channel path gain is given by 
$\alpha_\ell^{} = \sqrt{\eta_\ell^{} \,P_t^{} G_{\mathrm{BS}}^{} G_{\mathrm{UE}}^{} }\, e^{j\omega_\ell^{}}$, 
where $P_{t}^{}$ is the transmitted power, $G_{\mathrm{BS}}^{}$ and $G_{\mathrm{UE}}^{}$ are respectively the \ac{bs} and \ac{ue} antenna gains in linear scale, assuming  $G^{\mathrm{dBi}}_{\mathrm{BS}} = 7$  dBi and $G^{\mathrm{dBi}}_{\mathrm{UE}}=3$ dBi, $\omega_\ell^{}$ is the random phase, and $\eta_\ell^{}$ is the path-dependent power attenuation term defined as $\eta_0^{} = ({c}/{(4\pi f_c^{} d_0)})^2$ and $\eta_1=\sigma_{\mathrm{RCS}}^{} \, c^2 / ((4\pi)^3 f_c^{2} \, \| \mathbf{p}_{\mathrm{UE}}^{}-\mathbf{p}_{1}^{}\|^2 \, \| \mathbf{p}_{1}^{}-\mathbf{p}_{\mathrm{BS}}^{}\|^2)$, 
with $\sigma_{\mathrm{RCS}}^{} = 50\,\mathrm{m}^2$ the reflector \ac{rcs} \cite{zhang2025unified}. 

\subsection{Channel Parameter Estimation}
Channel estimation is performed using a fully automated low-complexity algorithm called FLEX~\cite{pourafzal2025flex}, which jointly recovers the tuple $(\tau_\ell^{}, \theta_\ell^{}, \phi_\ell^{})$ for all paths without requiring knowledge of $L$ or explicit parameter association. 

The FLEX algorithm first unfolds $\mathcal{Y}$ along the frequency dimension (mode-3), computes the inverse \ac{fft} per snapshot, and forms the Bartlett periodogram
    $P(k) = \frac{1}{SM} \sum_{m=1}^{SM} \left| \mathscr{F}^{-1}([\mathcal{Y}]_3^{:,m})[k] \right|^2$, 
where $[\mathcal{Y}]_3^{} \in \mathbb{C}^{K \times SM}$, $SM$ is the total number of spatial snapshots, and $\mathscr{F}^{-1}(\cdot)$ is the inverse \ac{fft}. Significant delay bins are identified via a \ac{cfar} detector, yielding $\widehat{L}$ active paths. For each detected delay $\widehat{\tau}_\ell^{}$, the corresponding frequency component is phase-compensated, and \ac{aoa}/\ac{aod} estimation is performed by extracting tensor slices and applying ESPRIT-based angle recovery. Formally, the FLEX estimator yields $ \widehat{\boldsymbol{\rho}} = \mathrm{FLEX}(\mathcal{Y})$.

\subsection{Performance Metrics}
We define two location error metrics: 
\begin{itemize}
    \item \textbf{Positioning error}: 
    $\varepsilon_{\mathrm{est}}^{} = \left\Vert \widehat{\mathbf{p}}_\mathrm{UE}^{} - \mathbf{p}_\mathrm{UE}^{} \right\Vert$, 
    representing the error between the estimated position and the true position;

    \item \textbf{Spoofing deviation}: 
    $\varepsilon_{\mathrm{dev}}^{} = \left\Vert \widehat{\mathbf{p}}_\mathrm{UE}^{} - \overline{\mathbf{p}}_\mathrm{UE}^{} \right\Vert$, 
    representing the deviation between the estimated position and the {desired} spoofed position.
\end{itemize}
All results are expressed as \ac{rmse} values, computed over $T = 250$ independent Monte Carlo simulations. 
The \ac{rmse} is defined, for a generic error $\varepsilon$, as $\mathrm{RMSE}=\sqrt{{1}/{T}\sum^{T}_{t=1}\varepsilon_t^2}.$

\subsection{Simulation Analyses and Results}

\blue{We evaluate the system along four dimensions. First, we analyze the communication rate to quantify how the selected spoofed geometry affects the link. Next, we evaluate localization and spoofing metrics as a function of transmit power, to study the impact of \ac{snr}. We then assess the measurement deviation, i.e., the error between the estimated and desired channel parameters, for the proposed solutions. Finally, we study sensitivity to imperfect \ac{gnss}-like position knowledge and imperfect \ac{csi}.}


\begin{figure}[t]
    \centering    
    \resizebox{1.1\columnwidth}{!}{
%
%

\begin{tikzpicture}

\begin{axis}[%
width=3.566in,
height=2.07in,
at={(-0.1in,0.551in)},
scale only axis,
point meta min=0,
point meta max=4179020345.56836,
axis on top,
ymin=0,
ymax=50.5,
xlabel style={font=\large\color{white!15!black}},
xlabel={y [m]},
xmin=-43.5,
xmax=43.5,
ylabel style={font=\large\color{white!15!black}, yshift=-10pt},
ylabel={x [m]},
axis background/.style={fill=black},
axis x line*=bottom,
axis y line*=left,
colormap/jet,
colorbar,
colorbar style={ylabel style={font=\large\color{white!15!black},yshift=-70pt, rotate=180}, 
ylabel={R [bps]}},
legend style={legend pos=south west, legend cell align=left, align=left, draw=white!15!black, font=\large},
ticklabel style={font=\large}
]
\addplot [forget plot] graphics [ymin=0.5, ymax=50.5, xmin=-43.5, xmax=43.5] {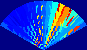};

\addplot [color=black, only marks, mark=otimes, line width= 1.2, mark size=4, mark options={solid, black}]
  table[row sep=crcr]{%
5	10\\
};
\addlegendentry{true UE}

\addplot [color=white, only marks, mark=oplus, line width= 1.2, mark size=4, mark options={solid, lightgray}]
  table[row sep=crcr]{%
-20	30\\
};
\addlegendentry{location \#1}

\addplot [mark color= none, only marks, mark=halfcircle, line width= 1.2, mark size=4, mark options={solid, teal}]
  table[row sep=crcr]{%
19	30\\
};
\addlegendentry{location \#2}

\addplot [color=red, only marks, mark=triangle, line width= 3.5, mark options={solid, red}]
  table[row sep=crcr]{%
0	0\\
};
\addlegendentry{BS}

\addplot [color=blue, forget plot]
  table[row sep=crcr]{%
0	25\\
0	43\\
};
\addplot [color=blue, forget plot]
  table[row sep=crcr]{%
0	25\\
0	-43\\
};
\end{axis}

\begin{axis}[%
width=4.725in,
height=2.375in,
at={(0in,0in)},
scale only axis,
point meta min=0,
point meta max=1,
xmin=0,
xmax=1,
ymin=0,
ymax=1,
axis line style={draw=none},
ticks=none,
axis x line*=bottom,
axis y line*=left
]
\end{axis}
\end{tikzpicture}%
    }
    \caption{\black{Channel rate evaluation heatmap for each spoofing position for a BS sector of [-60, 60] deg and radius of 50 m. The red triangle represents the BS, the black cross the true UE position, the gray plus marker the UE-spoofed position, and the teal minus marker the target UE-spoofed position optimized for the channel rate.}
    }

    \label{fig:scanner}
\end{figure}

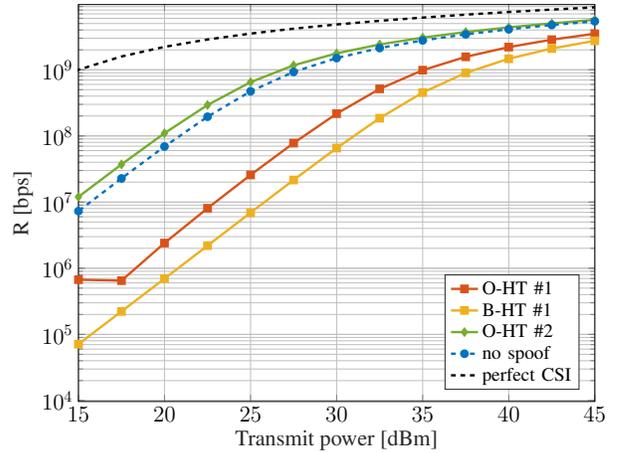
\begin{figure}[t]
    \centering
    \resizebox{\columnwidth}{!}{
%
%
\definecolor{mycolor1}{rgb}{0.00000,0.44700,0.74100}%
\definecolor{mycolor2}{rgb}{0.85098,0.32549,0.09804}%
\definecolor{mycolor3}{rgb}{0.92941,0.69412,0.12549}%
\definecolor{mycolor4}{rgb}{0.46667,0.67451,0.18824}%
\begin{tikzpicture}

\begin{axis}[%
width=4.521in,
height=3.3in,
at={(0.758in,0.568in)},
scale only axis,
xmin=15,
xmax=45,
xtick={15,20,25,30,35,40,45},
xlabel style={font=\Large\color{white!15!black}},
xlabel={Transmit power [dBm]},
ymode=log,
ymin=9674.43445164219,
ymax=9674434451.64219,
yminorticks=true,
ylabel style={font=\Large\color{white!15!black}},
ylabel={R [bps]},
axis background/.style={fill=white},
xmajorgrids,
ymajorgrids,
yminorgrids,
legend style={legend pos=south east, legend cell align=left, align=left, draw=white!15!black, font=\large},
ticklabel style={font=\Large}
]

\addplot [color=mycolor2, mark=square*, mark options={solid, mycolor2}, line width=1.5pt]
  table[row sep=crcr]{%
5	154146.021498974\\
7.5	989759.90637535\\
10	1879968.15513533\\
12.5	3478808.09088439\\
15	671247.896372461\\
17.5	649421.725543074\\
20	2402714.90143547\\
22.5	8069471.02748524\\
25	25735961.6450319\\
27.5	77921679.8980063\\
30	216958458.490187\\
32.5	514363049.078129\\
35	985940213.11814\\
37.5	1569439467.25011\\
40	2201512456.99658\\
42.5	2850894947.12024\\
45	3505999348.42167\\
};
\addlegendentry{O-HT, loc. \#1}

\addplot [color=mycolor3, mark=square*, mark options={solid, mycolor3},line width=1.5pt]
  table[row sep=crcr]{%
5	187027.527299095\\
7.5	817886.946252148\\
10	7303.70090977583\\
12.5	22915.3773369627\\
15	71188.9100223297\\
17.5	222318.877940911\\
20	696982.116836536\\
22.5	2193928.81517888\\
25	6909967.48025394\\
27.5	21564269.2443146\\
30	65550129.2423061\\
32.5	185510130.646467\\
35	453003320.772004\\
37.5	898042677.006522\\
40	1467624525.12054\\
42.5	2094039342.85535\\
45	2741724776.42338\\
};
\addlegendentry{B-HT, loc. \#1}

\addplot [color=mycolor4, mark=diamond*, mark options={solid, mycolor4},line width=1.5pt]
  table[row sep=crcr]{%
5	121209.679434079\\
7.5	383221.476751884\\
10	1210963.35367914\\
12.5	3820555.05447564\\
15	11995861.4569422\\
17.5	37103590.7399018\\
20	109935334.834607\\
22.5	293316756.863954\\
25	650390730.827301\\
27.5	1166890777.59815\\
30	1771554861.56401\\
32.5	2411453020.00323\\
35	3063435252.92672\\
37.5	3719338043.17832\\
40	4376491671.84914\\
42.5	5034019818.24341\\
45	5691738445.50601\\
};
\addlegendentry{O-HT, loc. \#2}

\addplot [color=mycolor1, dashed, mark=*, mark options={solid, mycolor1},line width=1.5pt]
  table[row sep=crcr]{%
5	73497.7327379041\\
7.5	232405.697009836\\
10	734545.209668339\\
12.5	2319694.66919237\\
15	7303546.64172356\\
17.5	22784169.1009901\\
20	69145251.50072\\
22.5	195009508.941038\\
25	472458697.404595\\
27.5	927187233.506437\\
30	1502139116.30141\\
32.5	2130930090.19097\\
35	2779257687.51049\\
37.5	3433987309.92053\\
40	4090723378.16567\\
42.5	4748232403.57451\\
45	5405834674.56553\\
};
\addlegendentry{no spoof}

\addplot [color=black, dashed, line width=1.5pt]
  table[row sep=crcr]{%
5	26619115.8850052\\
7.5	80258026.179073\\
10	222801233.948967\\
12.5	525371705.220489\\
15	1001273923.13108\\
17.5	1586947497.70439\\
20	2219868705.45973\\
22.5	2869530825.81285\\
25	3524693776.56665\\
27.5	4181618278.29202\\
30	4839100472.625\\
32.5	5496761061.09269\\
35	6154476179.75122\\
37.5	6812209925.8139\\
40	7469949347.22948\\
42.5	8127690021.67943\\
45	8785432296.32475\\
};
\addlegendentry{perfect CSI}
\end{axis}

\begin{axis}[%
width=5.833in,
height=4.375in,
at={(0in,0in)},
scale only axis,
xmin=0,
xmax=1,
ymin=0,
ymax=1,
axis line style={draw=none},
ticks=none,
axis x line*=bottom,
axis y line*=left
]
\end{axis}
\end{tikzpicture}
    \caption{\black{Channel rate performance comparison.}}
    \label{fig:communication}
\end{figure}

\subsubsection{Channel Rate vs Transmit Power} 
\black{To assess how the communication link quality is influenced by the selected spoofing position, we analyze the channel rate $R$, as defined in \eqref{eq:comm_rate}, over the \ac{bs} coverage area within a radius of 50\,m and angular sector [-60, 60]\,deg. Fig.\,\ref{fig:scanner} illustrates the channel rate heatmap with $P_t^{}=35$\,dBm, where each square represents the candidate spoofing position. The \ac{sp} is always selected within the same distance from the spoofing position.
The initially selected location \#1 (marked with a gray plus) lies in a region of relatively low channel rate. For this reason, we also consider an \emph{optimized} location \#2 (marked with a teal minus), chosen to yield a channel rate similar to that of the true position (black cross).}

Fig.~\ref{fig:communication} reports the achievable channel rate in the absence of spoofing (dashed blue circles), and under spoofing, using the oracle solution O-HT (orange square) and the blind solution B-HT (yellow square) for spoofing position \#1, as well as the oracle solution (green diamond) for optimized position \#2. All results are compared to the ideal channel rate with perfect \ac{csi} (dashed black line). \blue{The results show that the communication penalty is geometry-dependent: spoofing position \#2 lies in a direction compatible with the selected beams and therefore achieves a rate close to the no-spoofing case, whereas position \#1 induces a larger rate loss. The blind solution has a lower rate in this example because its induced position is far from the intended spoofed geometry.}

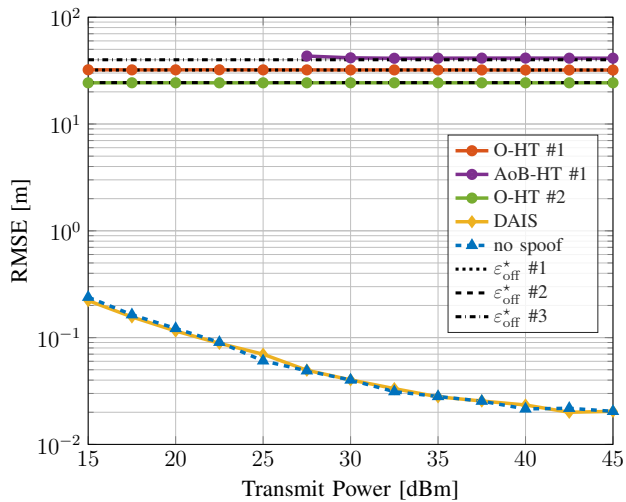
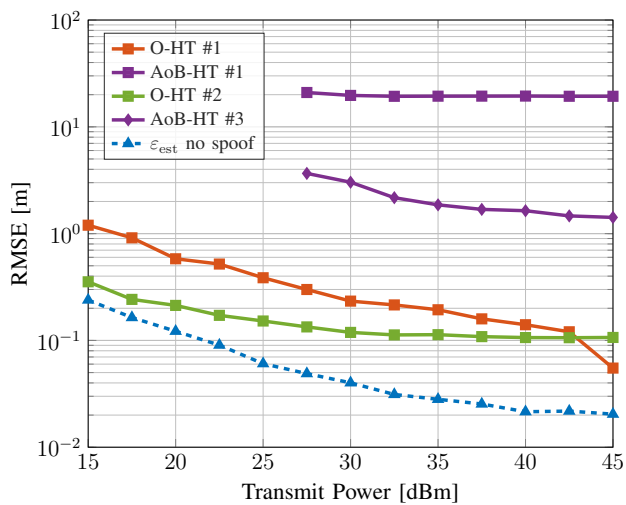
\begin{figure}[t]
    \centering
    \subfloat[Positioning error \label{fig:est_performance}]{ 
    \resizebox{\columnwidth}{!}{\definecolor{color_spoof}{rgb}{0.85098,0.32549,0.09804}
\definecolor{color_aob}{rgb}{0.49412,0.18431,0.55686}
\definecolor{color_bht}{rgb}{0.46667,0.67451,0.18824}
\definecolor{color_dais}{rgb}{0.92941,0.69412,0.12549}
\definecolor{color_clean}{rgb}{0.00000,0.44700,0.74100}

\begin{tikzpicture}
\begin{axis}[%
width=4.4in,
height=3.3in,
at={(0.758in,0.481in)},
xlabel={Transmit Power [dBm]},
ylabel={RMSE [m]},
xlabel style={font=\large},
ylabel style={font=\large},
ymode=log,
ymin=1e-2,
ymax=1e2,
xmin = 15,
xmax = 45,
yminorticks=true,
ytick={1e-2,1e-1,1e0,1e1,1e2},
yticklabels={$10^{-2}$,$10^{-1}$,$10^0$,$10^1$,$10^2$},
xmajorgrids,
ymajorgrids,
yminorgrids,
legend style={at={(0.585,0.725)}, anchor=north west, font=\normalsize, draw=white!15!black, fill opacity=0.85},
legend cell align={left},
ticklabel style={font=\large}
]



\addplot [color=color_spoof, line width=1.8pt, mark=*, mark options={solid, color_spoof}, mark size=2pt]
table[row sep=crcr]{
15	32.1822\\ 17.5	32.1259\\ 20	32.1442\\ 22.5	32.1834\\ 25	32.1253\\
27.5	32.1089\\ 30	32.0736\\ 32.5	32.1026\\ 35	32.0837\\
37.5	32.0585\\ 40	32.0422\\ 42.5	32.0289\\ 45	32.0002\\
};
\addlegendentry{O-HT, loc. \#1}

\addplot [color=color_aob, line width=1.8pt, mark=*, mark options={solid, color_aob}, mark size=2pt]
table[row sep=crcr]{
27.5	43.2753\\ 30	41.6629\\ 32.5	41.2539\\ 35	41.3529\\
37.5	41.3926\\ 40	41.4313\\ 42.5	41.3316\\ 45	41.3334\\
};
\addlegendentry{AoB-HT, loc. \#1}


\addplot [color=color_bht, line width=1.8pt, mark=*, mark options={solid, color_bht}, mark size=2pt]
  table[row sep=crcr]{%
15	24.2993930715361\\
17.5	24.3297812389201\\
20	24.3195694192072\\
22.5	24.3017042211679\\
25	24.3024707829679\\
27.5	24.3053060827921\\
30	24.3085485390416\\
32.5	24.3094675328117\\
35	24.3050943904781\\
37.5	24.3078719188193\\
40	24.308405149984\\
42.5	24.3080982131877\\
45	24.3070608521079\\
};
\addlegendentry{O-HT, loc. \#2}

\addplot [color=color_dais, line width=1.8pt, mark=diamond*, mark options={solid, color_dais}, mark size=2pt]
table[row sep=crcr]{
15	0.2221\\ 17.5	0.1562\\ 20	0.1153\\ 22.5	0.0884\\ 25	0.0699\\
27.5	0.0495\\ 30	0.0402\\ 32.5	0.0333\\ 35	0.0279\\
37.5	0.0255\\ 40	0.0234\\ 42.5	0.0200\\ 45	0.0204\\
};
\addlegendentry{DAIS}

\addplot [color=color_clean,  dashed, line width=1.8pt, mark=triangle*, mark options={solid, color=color_clean}, mark size=2pt]
table[row sep=crcr]{
15	0.2390\\ 17.5	0.1638\\ 20	0.1219\\ 22.5	0.0904\\ 25	0.0606\\
27.5	0.0489\\ 30	0.0401\\ 32.5	0.0312\\ 35	0.0281\\
37.5	0.0254\\ 40	0.0215\\ 42.5	0.0218\\ 45	0.0204\\
};
\addlegendentry{no spoof}

\addplot [black, dotted, line width=1.6pt]
table[row sep=crcr]{15	32.0156\\ 45	32.0156\\};
\addlegendentry{$\varepsilon^{\star}_{\mathrm{off}}$, loc. \#1}

\addplot [color=black, dashed, line width=1.6pt]
  table[row sep=crcr]{%
15	24.4131112314674\\
45	24.4131112314674\\
};
\addlegendentry{$\varepsilon^{\star}_{\mathrm{off}}$, loc. \#2}

\addplot [color=black, dashdotted, line width=1.6pt]
  table[row sep=crcr]{%
15	39.9537032502621\\
45	39.9537032502621\\
};
\addlegendentry{$\varepsilon^{\star}_{\mathrm{off}}$, loc. \#3}

\end{axis}
\end{tikzpicture}}
    }\\
    \subfloat[Spoofing deviation \label{fig:dev_performance}]{ 
    \resizebox{\columnwidth}{!}{\definecolor{color_spoof}{rgb}{0.85098,0.32549,0.09804}
\definecolor{color_aob}{rgb}{0.49412,0.18431,0.55686}
\definecolor{color_bht}{rgb}{0.46667,0.67451,0.18824}
\definecolor{color_dais}{rgb}{0.92941,0.69412,0.12549}
\definecolor{color_clean}{rgb}{0.00000,0.44700,0.74100}

\begin{tikzpicture}
\begin{axis}[%
width=4.4in,
height=3.3in,
at={(0.758in,0.481in)},
xlabel={Transmit Power [dBm]},
ylabel={RMSE [m]},
xlabel style={font=\large},
ylabel style={font=\large},
ymode=log,
ymin=1e-2,
ymax=1e2,
xmin = 15,
xmax = 45,
yminorticks=true,
ytick={1e-2,1e-1,1e0,1e1,1e2},
yticklabels={$10^{-2}$,$10^{-1}$,$10^0$,$10^1$,$10^2$},
xmajorgrids,
ymajorgrids,
yminorgrids,
legend style={at={(0.46,0.725)}, anchor=north west, font=\normalsize, draw=white!15!black, fill opacity=0.85},
legend cell align={left},
legend pos ={north west},
ticklabel style={font=\large}
]

\addplot [color=color_spoof, line width=1.8pt, mark=square*, mark options={solid, color_spoof}]
table[row sep=crcr]{
15	1.1991\\ 17.5	0.9134\\ 20	0.5815\\ 22.5	0.5191\\ 25	0.3863\\
27.5	0.2998\\ 30	0.2334\\ 32.5	0.2150\\ 35	0.1937\\ 37.5	0.1592\\
40	0.1403\\ 42.5	0.1205\\ 45	0.0551\\
};
\addlegendentry{O-HT, loc. \#1}

\addplot [color=color_aob, line width=1.8pt, mark=square*, mark options={solid, color_aob}]
table[row sep=crcr]{
27.5	20.9592\\ 30	19.6858\\ 32.5	19.3092\\ 35	19.3614\\
37.5	19.3814\\ 40	19.4080\\ 42.5	19.3239\\ 45	19.3227\\
};
\addlegendentry{AoB-HT, loc. \#1}

\addplot [color=color_bht, line width=1.8pt, mark=square*, mark options={solid, color_bht}]
  table[row sep=crcr]{%
15	0.354123266137906\\
17.5	0.242275834320756\\
20	0.212761075000605\\
22.5	0.171722910948071\\
25	0.152029388449887\\
27.5	0.133848991901416\\
30	0.118892190100176\\
32.5	0.112647326377211\\
35	0.113216176372243\\
37.5	0.108535354996469\\
40	0.106356162446992\\
42.5	0.106016021363265\\
45	0.106633332463708\\
};
\addlegendentry{O-HT, loc. \#2}

\addplot [color=color_aob, line width=1.8pt, mark=diamond*, mark options={solid, color_aob}, mark size=3pt]
  table[row sep=crcr]{
27.5	3.6631\\ 30	 3.0287\\ 32.5	2.17\\35	1.86\\
37.5	1.6864\\ 40	1.6386\\ 42.5	1.4654\\45	1.4186\\
};
\addlegendentry{AoB-HT, loc. \#3}

\addplot [color=color_clean,  dashed, line width=1.8pt, mark=triangle*, mark options={solid, color=color_clean}]
table[row sep=crcr]{
15	0.2390\\ 17.5	0.1638\\ 20	0.1219\\ 22.5	0.0904\\ 25	0.0606\\
27.5	0.0489\\ 30	0.0401\\ 32.5	0.0312\\ 35	0.0281\\
37.5	0.0254\\ 40	0.0215\\ 42.5	0.0218\\ 45	0.0204\\
};
\addlegendentry{$\varepsilon^{}_{\mathrm{est}}$ no spoof}
\end{axis}
\end{tikzpicture}}
    }
     \caption{\black{Performance metrics comparison between the different spoofing methods and locations.}}
    \label{fig:performance}
    \vspace{-5pt}
\end{figure}

\subsubsection{Performance Metrics vs Transmit Power}
\blue{We compare oracle HoloTrace (O-HT), blind HoloTrace (B-HT), angle-only blind HoloTrace (AoB-HT, obtained by setting $\mathbf{x}_{\mathrm{d}}^{}=\mathbf{1}_M^{}$), and \ac{dais} performance metrics. Oracle and blind HoloTrace are described in Sections~\ref{sec:oracle} and~\ref{sec:blind}, respectively. \ac{dais} is included as the closest \ac{csi}-free shift-based baseline, matched to our blind variants; a direct comparison with the other methods in Table~\ref{tab:privacy-comparison} is less equitable because those methods target single-parameter modifications or different array assumptions.}
\blue{\ac{dais} applies common delay and \ac{aod} shifts to all paths and assumes a synchronized \ac{ue} and a single-antenna \ac{bs}. For illustration, we use $\Delta_\tau = 15/c$ seconds and $\Delta_\phi = 0.17$\,rad, but in our asynchronous multi-antenna single-anchor setting, the position estimate depends on inter-path differences and on the absolute \ac{los} \ac{aoa}, which \ac{dais} does not affect. Hence, the common shifts cancel in \eqref{eq:d_los}, and the result is independent of the particular shift values.}
\black{Fig.~\ref{fig:performance} provides a unified view of the positioning error (Fig.~\ref{fig:est_performance}) and spoofing deviation (Fig.~\ref{fig:dev_performance}) across all the approaches and locations.
In Fig.~\ref{fig:est_performance}, the orange circle curve for O-HT in location \#1, the purple circle curve for AoB-HT in location \#1, the green circle curve for O-HT in location \#2,} and the yellow diamond curve for \ac{dais} show the actual positioning \ac{rmse} under spoofing or measurement shifts. \black{The desired spoofing offsets $\varepsilon^{\star}_{\mathrm{off}} = \Vert \overline{\mathbf{p}^{}}_{\mathrm{UE}}-\mathbf{p}^{}_{\mathrm{UE}}\Vert$ are indicated with the black dotted, dashed, and dot-dashed lines, depending on the spoofing location (\#1, \#2, or \#3), while the blue triangle line marks the performance without any spoofing/shifting applied. Location \#3 is derived from \eqref{eq:p_est} using the estimated measurements of the AoB-HT solution (see Section~\ref{sec:meas_dev}).
In Fig.~\ref{fig:dev_performance}, the orange square for O-HT in location \#1, the purple square for AoB-HT in location \#1, and the green square curve for O-HT in location \#2 indicate how well each method can ``move'' the estimated position to a desired (spoofed) location with respect to the performance without spoofing. The purple diamond curve, instead, shows the spoofing deviation of AoB-HT with respect to the location \#3. B-HT performances are not shown because of the large error.}

\blue{From the figures, we observe several distinct behaviors: \textit{(i)} in the evaluated oracle setting, O-HT closely realizes the target spoofed position, as its positioning error follows the desired spoofing offset $\varepsilon^{\star}_{\mathrm{off}}$ and its spoofing deviation remains small. \textit{(ii)} \ac{dais} does not spoof the location in our single-anchor, difference-based setting; its error remains close to the no-spoofing positioning error because common shifts in delays and angles do not change the measurement differences used in \eqref{eq:d_los}. \textit{(iii)} Blind (B-HT) can obfuscate, i.e., create large errors in the estimated position, but it does not reliably steer the estimate to the chosen spoofed location. This is primarily due to the pairing problem: the Kronecker product structure couples multiple path delays to the same angle, causing the estimator to recover inconsistent path tuples. Fig.~\ref{fig:pairing_problem} illustrates this limitation for a joint \ac{aoa}/\ac{toa} \ac{mf} scenario. \textit{(iv)} Angle-only blind (AoB-HT) provides a more controlled approximate spoofing behavior than B-HT in this scenario, especially at higher transmit powers. At low transmit powers, however, its performance degrades because the weaker path is masked by noise, making two-path localization infeasible for received powers below -44.6\,dBm in our settings.}
\begin{figure}[t]
    \centering
    \begin{tikzpicture}
    \node[] at (0,0) {\includegraphics[width=0.9\columnwidth, trim=0 1cm 0 1.5cm, clip]{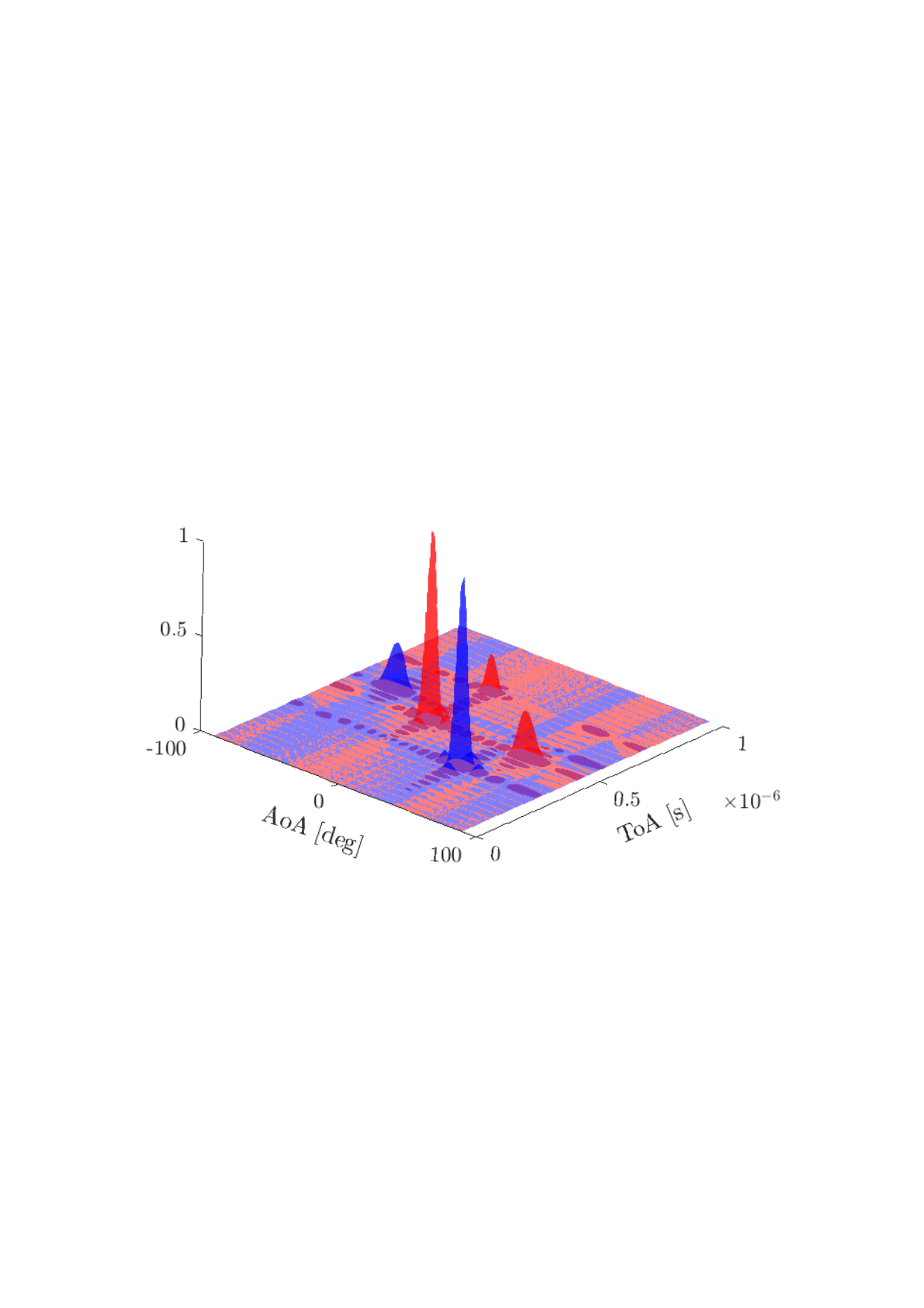}};
     \begin{scope}[yshift=3.5mm]
    \draw[black, dashed, line width = 1pt] (-0.4,-0.75)--(-0.38,1.5);
    \draw[black, dashed, line width = 1pt] (1.62,-0.93)--(1.62,-0.33);
    \end{scope}
    \end{tikzpicture}
    \caption{\black{Pairing problem example for Blind HoloTrace solution, showing the joint AoA/ToA spectra. The red and blue curves are, respectively, the MF estimations with and without spoofing. For simplicity, the ToA spoofing is given with a single $\Delta^{}_\tau$ shift without path injection. The black dashed lines are the desired spoofing values.}}
    \label{fig:pairing_problem}
\end{figure}

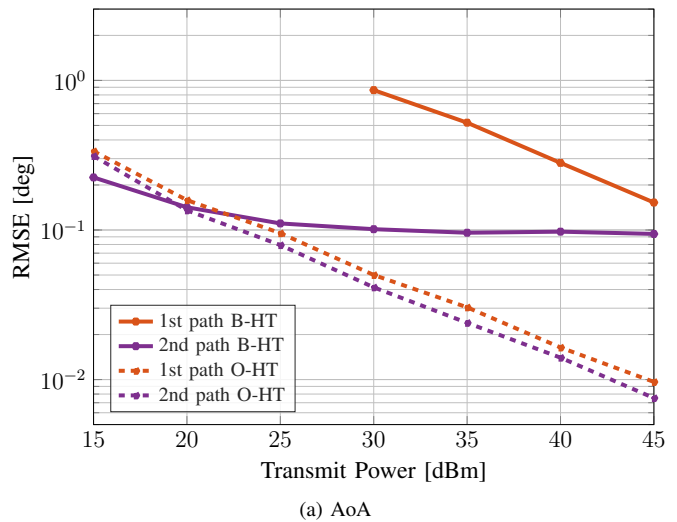
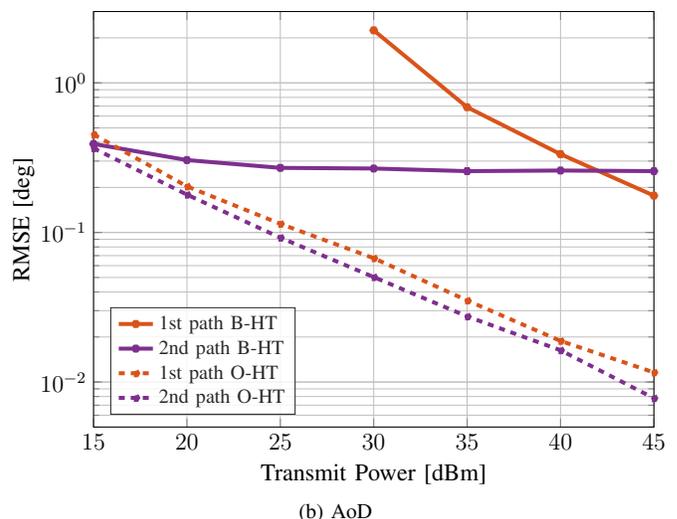
\begin{figure}[t]
    \centering
    \subfloat[AoA \label{fig:aoa_meas}]{ 
    \resizebox{\columnwidth}{!}{
%
%
\definecolor{mycolor1}{rgb}{0.85098,0.32549,0.09804}%
\definecolor{mycolor2}{rgb}{0.00000,0.44700,0.74100}%
\begin{tikzpicture}

\begin{axis}[%
width=4.5in,
height=2.8in,
at={(0.758in,0.481in)},
xlabel={Transmit Power [dBm]},
ylabel={RMSE [deg]},
xlabel style={font=\large},
ylabel style={font=\large},
ymode=log,
ymin=5e-3,
ymax=3e0,
xmin = 15,
xmax = 45,
yminorticks=true,
ytick={1e-3,1e-2,1e-1,1e0,1e1},
yticklabels={$10^{-3}$,$10^{-2}$,$10^{-1}$,$10^{0}$,$10^{1}$},
xtick={15,20,25,30,35,40,45},
xmajorgrids,
ymajorgrids,
yminorgrids,
legend style={at={(085,0.725)}, anchor=north west, font=\normalsize, draw=white!15!black, fill opacity=0.85},
legend cell align={left},
legend pos = {south west},
ticklabel style={font=\large}
]

\addplot [color=mycolor1, mark=asterisk, line width=2pt, mark size=3pt]
  table[row sep=crcr]{%
30	0.861479936759003\\
35	0.521802617811235\\
40	0.281013208797018\\
45	0.152675819859978\\
};
\addlegendentry{\nth{1} path B-HT}
\addplot [color=mycolor2, mark=asterisk,line width=2pt, mark size=3pt]
  table[row sep=crcr]{%
15	0.224624578662731\\
20	0.141876470235656\\
25	0.110628538958324\\
30	0.101259678825463\\
35	0.0958612126980995\\
40	0.0974502363736554\\
45	0.0941332644245541\\
};
\addlegendentry{\nth{2} path B-HT}

\addplot [color=mycolor1, dashed, mark=asterisk, line width=2pt, mark size=3pt]
  table[row sep=crcr]{%
15	0.336548310218893\\
20	0.158397313889866\\
25	0.0952433637903796\\
30	0.050035160681951\\
35	0.0304086506957039\\
40	0.0163875211572668\\
45	0.00964366964041818\\
};
\addlegendentry{\nth{1} path O-HT}
\addplot [color=mycolor2, dashed, mark=asterisk, line width=2pt, mark size=3pt]
  table[row sep=crcr]{%
15	0.312407646832556\\
20	0.135368390971179\\
25	0.0792872833740157\\
30	0.0413726569037083\\
35	0.0239511800681419\\
40	0.0140112240248822\\
45	0.00752341642568915\\
};
\addlegendentry{\nth{2} path O-HT}
\end{axis}
\end{tikzpicture}
    }\\
    \subfloat[AoD \label{fig:aod_meas}]{ 
    \resizebox{\columnwidth}{!}{
%
%
\definecolor{mycolor1}{rgb}{0.85098,0.32549,0.09804}%
\definecolor{mycolor2}{rgb}{0.00000,0.44700,0.74100}%
\begin{tikzpicture}

\begin{axis}[%
width=4.5in,
height=2.8in,
at={(0.758in,0.481in)},
xlabel={Transmit Power [dBm]},
ylabel={RMSE [deg]},
xlabel style={font=\large},
ylabel style={font=\large},
ymode=log,
ymin=5e-3,
ymax=3e0,
xmin = 15,
xmax = 45,
yminorticks=true,
ytick={1e-3,1e-2,1e-1,1e0,1e1},
yticklabels={$10^{-3}$,$10^{-2}$,$10^{-1}$,$10^{0}$,$10^{1}$},
xtick={15,20,25,30,35,40,45},
xmajorgrids,
ymajorgrids,
yminorgrids,
legend style={at={(085,0.725)}, anchor=north west, font=\normalsize, draw=white!15!black, fill opacity=0.85},
legend cell align={left},
legend pos = {south west},
ticklabel style={font=\large}
]

\addplot [color=mycolor1, mark=asterisk, line width=2pt, mark size=3pt]
  table[row sep=crcr]{%
30	2.24610073476887\\
35	0.687844359276948\\
40	0.333909709456944\\
45	0.17628204004795\\
};
\addlegendentry{\nth{1} path B-HT}
\addplot [color=mycolor2, mark=asterisk,line width=2pt, mark size=3pt]
  table[row sep=crcr]{%
15	0.39071337321755\\
20	0.304723559836054\\
25	0.270218084135611\\
30	0.267590930569323\\
35	0.256939861296453\\
40	0.259424354540545\\
45	0.257012002290949\\
};
\addlegendentry{\nth{2} path B-HT}

\addplot [color=mycolor1, mark=asterisk, dashed, line width=2pt, mark size=3pt]
  table[row sep=crcr]{%
15	0.452153872441301\\
20	0.203380291306529\\
25	0.114240368078173\\
30	0.0672839815076981\\
35	0.0351114013246226\\
40	0.0188352034800083\\
45	0.0115896424985195\\
};
\addlegendentry{\nth{1} path O-HT}
\addplot [color=mycolor2, mark=asterisk, dashed, line width=2pt, mark size=3pt]
  table[row sep=crcr]{%
15	0.365560974091837\\
20	0.179090934526353\\
25	0.0922905361241793\\
30	0.0503022120550704\\
35	0.027456775888399\\
40	0.0163077988354574\\
45	0.00779659652368217\\
};
\addlegendentry{\nth{2} path O-HT}
\end{axis}

\end{tikzpicture}
    }
    \caption{\black{HoloTrace (a) AoA and (b) AoD measurement deviation of the oracle (dashed line) and blind (solid line) solution. The orange and purple colors represent the first and the second path, respectively.}}
    \label{fig:angle_meas}
\end{figure}

\subsubsection{Measurement Deviation Analysis}
\label{sec:meas_dev}
\blue{To better characterize HoloTrace, we analyze the measurement deviation between the estimated and desired measurements. Fig.~\ref{fig:angle_meas} reports the angle measurement mismatch of the oracle and blind HoloTrace solutions for both \ac{aoa} (Fig.~\ref{fig:aoa_meas}) and \ac{aod} (Fig.~\ref{fig:aod_meas}). For the oracle solution (dashed line), the \ac{rmse} decreases with transmit power and converges to small values for both paths, consistent with the closed-form oracle construction.}

\blue{For the blind solution (solid line), we observe a different behavior. Because the order of the paths cannot be controlled, the secondary path often becomes predominant. At low transmit powers (up to 27.5\,dBm), noise makes the secondary path, i.e., the \ac{los} path, undetectable by FLEX, preventing the blind angle-only scheme from producing a position estimate. Blind HoloTrace can still provide an estimate by duplicating the primary path, but this estimate can be far from the intended spoofed position. Unlike the oracle case, the measurement deviation for blind HoloTrace does not decrease consistently with transmit power, and a residual mismatch persists across all tested power levels. Thus, the blind design is best interpreted as an approximate spoofing or obfuscation mechanism rather than an exact multipath spoofing solution.}

\begin{figure}[t]
    \centering
    \resizebox{\columnwidth}{!}{
%
%
\definecolor{mycolor1}{rgb}{0.00000,0.44700,0.74100}%
\definecolor{mycolor2}{rgb}{0.85000,0.32500,0.09800}%
\begin{tikzpicture}

\begin{axis}[%
width=4.516in,
height=1.6in, 
at={(0.758in,0.57in)},
scale only axis,
xmin=-1.53814580106126e-26,
xmax=1.00000000135465,
xlabel style={font=\Large\color{white!15!black}},
xlabel={$\sigma_\epsilon^\mathrm{UE}$ [m]},
ymin=2.75713700584129,
ymax=45.9941886308302,
ylabel style={font=\Large\color{white!15!black}},
ylabel={RMSE [m]},
axis background/.style={fill=white},
xmajorgrids,
ymajorgrids,
legend style={legend pos=south east, legend cell align=left, align=left, draw=white!15!black, font=\Large},
ticklabel style={font=\Large}
]
\addplot [color=mycolor1, line width=1.5pt, mark=o, mark options={solid, mycolor1}, mark size=2pt]
  table[row sep=crcr]{%
0	32.1044319644883\\
0.1	38.5373943767044\\
0.2	42.0017373450955\\
0.3	39.1234307898614\\
0.4	41.7318588443901\\
0.5	35.8034925960344\\
0.6	37.1139002670664\\
0.7	33.9790626618873\\
0.8	35.4422661748293\\
0.9	38.6404427751999\\
1	45.9941884554808\\
};
\addlegendentry{$\varepsilon^{}_{\mathrm{est}}$}

\addplot [color=mycolor2, line width=1.5pt, mark=o, mark options={solid, mycolor2}, mark size=2pt]
  table[row sep=crcr]{%
0	2.75713700584129\\
0.1	16.9940822758729\\
0.2	24.0190547553505\\
0.3	23.4347587321629\\
0.4	27.5843394054154\\
0.5	23.8873405107007\\
0.6	26.4585219939954\\
0.7	26.5216565137525\\
0.8	25.8610544126743\\
0.9	27.1884498245865\\
1	35.7215740814421\\
};
\addlegendentry{$\varepsilon^{}_{\mathrm{dev}}$}

\addplot [color=black, dotted, line width=1.5pt]
  table[row sep=crcr]{%
-1.53814580106126e-26	32.0156211871642\\
1.00000000135465	32.0156211871642\\
};
\addlegendentry{$\varepsilon^{\star}_{\mathrm{off}}$}

\end{axis}
\end{tikzpicture}
    \caption{\blue{Oracle HoloTrace sensitivity to UE location uncertainty.}}
    \label{fig:uncertainty}
    \vspace{-2pt}
\end{figure}

\subsubsection[Sensitivity to CSI and UE Location Uncertainty]{\blue{Sensitivity to CSI and UE Location Uncertainty}}
\blue{We assess the sensitivity of oracle HoloTrace to \ac{csi} and positioning uncertainty by fixing the transmit power to $P_t^{}=35$\,dBm. Since practical \acp{ue} may suffer from localization and channel-estimation errors, we consider two uncertainty sources: imperfect position knowledge and imperfect \ac{csi}.}

\blue{For location uncertainty, we perturb the \ac{ue} position $\mathbf{p}_\mathrm{UE}^{}$ with Gaussian noise of standard deviation $\sigma_\epsilon^{\mathrm{UE}} \in [0,1]$\,m, while the \ac{sp} position $\mathbf{p}_\mathrm{SP}^{}$ is perturbed with fixed standard deviation $\sigma_\epsilon^{\mathrm{SP}}=0.1$\,m. Estimates outside the \ac{bs} coverage area (150\,m) are discarded. As shown in Fig.~\ref{fig:uncertainty}, larger $\sigma_\epsilon^{\mathrm{UE}}$ increases the spoofing deviation, whereas the positioning error remains above the prescribed offset in the tested range.}
\blue{For \ac{csi} uncertainty, we add a complex perturbation to the channel coefficients,
$\widehat{\boldsymbol{\alpha}} = \boldsymbol{\alpha} + \delta\boldsymbol{\alpha}$,
with
$\mathbb{E}[\|\delta\boldsymbol{\alpha}\|^2] = \varrho\, \mathbb{E}[\|\boldsymbol{\alpha}\|^2]$,
where $\varrho$ is the \ac{nmse}. Fig.~\ref{fig:CSI_uncertainty} shows the results for $\varrho_\mathrm{dB}^{}=10\log_{10}\varrho \in [-20,20]$~dB. 
The spoofing performance remains controlled up to about 12.5\,dB; beyond this point, the spoofing deviation increases, while the position estimation error still follows the prescribed offset threshold.}

\blue{Overall, in the tested uncertainty ranges, oracle HoloTrace continues to displace the estimated position away from the true one, but its accuracy with respect to the desired spoofed location degrades as location or \ac{csi} uncertainty increases.}

\begin{figure}[t]
    \centering
    \resizebox{\columnwidth}{!}{
%
%
\definecolor{mycolor1}{rgb}{0.00000,0.44700,0.74100}%
\definecolor{mycolor2}{rgb}{0.85000,0.32500,0.09800}%
\begin{tikzpicture}

\begin{axis}[%
width=4.516in,
height=1.6in, 
at={(0.758in,0.57in)},
scale only axis,
xmin=-20,
xmax=20,
xlabel style={font=\Large\color{white!15!black}},
xlabel={$\varrho_\mathrm{dB}^{}$ [dB]},
ymode=log,
ymin=1e-1,
ymax=1e2,
ylabel style={font=\Large\color{white!15!black}},
ylabel={RMSE [m]},
axis background/.style={fill=white},
xmajorgrids,
ymajorgrids,
legend style={ at={(0.02, 0.65)}, legend cell align=left, anchor=north west, align=left, draw=white!15!black, font=\Large},
ticklabel style={font=\Large}
]
\addplot [color=mycolor1, line width=1.5pt, mark=o, mark options={solid, mycolor1}, mark size=2pt]
  table[row sep=crcr]{%
-20.0000 24.3093 \\
-17.5000 24.3061 \\
-15.0000 24.3065 \\
-12.5000 24.3043 \\
-10.0000 24.3059 \\
-7.5000 24.3117 \\
-5.0000 24.3062 \\
-2.5000 24.3083 \\
0.0000 24.3096 \\
2.5000 24.3074 \\
5.0000 24.3038 \\
7.5000 24.3059 \\
10.0000 24.3075 \\
12.5000 24.3075 \\
15.0000 24.3562 \\
17.5000 24.7820 \\
20.0000 26.0608 \\ 
};
\addlegendentry{$\varepsilon^{}_{\mathrm{est}}$}

\addplot [color=mycolor2, line width=1.5pt, mark=o, mark options={solid, mycolor2}, mark size=2pt]
  table[row sep=crcr]{%
-20.0000 0.1095 \\
-17.5000 0.1128 \\
-15.0000 0.1124 \\
-12.5000 0.1140 \\
-10.0000 0.1121 \\
-7.5000 0.1092 \\
-5.0000 0.1179 \\
-2.5000 0.1128 \\
0.0000 0.1106 \\
2.5000 0.1228 \\
5.0000 0.1200 \\
7.5000 0.1110 \\
10.0000 0.1075 \\
12.5000 0.1084 \\
15.0000 2.8111 \\
17.5000 3.5601 \\
20.0000 10.3276 \\     
};
\addlegendentry{$\varepsilon^{}_{\mathrm{dev}}$}

\addplot [color=black, dotted, line width=1.5pt]
  table[row sep=crcr]{%
-20	24.413111231467404\\
20	24.413111231467404\\
};
\addlegendentry{$\varepsilon^{\star}_{\mathrm{off}}$}

\end{axis}
\end{tikzpicture}
    \caption{\blue{Oracle HoloTrace sensitivity to \ac{csi} uncertainty.}}
    \label{fig:CSI_uncertainty}
    \vspace{-2pt}
\end{figure}

\section{Conclusion and Future Work}
\label{sec:conclusion}
\vspace{-3pt}
\blue{In this paper, we presented HoloTrace, a signal-level location-privacy framework for single-\ac{bs} mmWave \ac{mimo}-\ac{ofdm} localization with analog arrays and an unsynchronized \ac{ue}. By modifying pilot symbols rather than the spatial-domain precoder, it formulates \ac{aoa}, \ac{aod}, and \ac{tdoa} spoofing as a subspace projection-based signal design problem. 
The proposed framework is validated by theoretical analysis and numerical evaluation, characterizing achievable spoofing performance under oracle and blind operations.
With oracle \ac{csi} knowledge, closed-form designs realize a desired spoofed geometry, while blind designs induce obfuscation or approximate spoofing. In the evaluated two-path scenario, the oracle HoloTrace steers the inferred position toward the target, whereas blind variants are limited by the angle-delay pairing ambiguity. The communication penalty is not fixed but geometry-dependent: locations compatible with the selected beams preserve rates close to the no-spoofing case, and oracle HoloTrace remains effective over the tested location- and \ac{csi}-error ranges, with accuracy degrading as uncertainty grows.}
\blue{
Despite these promising results, blind spoofing remains fundamentally limited in general multipath scenarios due to the unresolved path-association ambiguity. Furthermore, the presented evaluation focused 
on a single-\ac{bs}, two-path geometry with a specific estimator. Future work should address robust blind spoofing under richer multipath dynamics, digital and hybrid arrays, multi-anchor settings, and adversaries whose positions and orientations are unknown to the \ac{ue}, as well as spoofed-location selection rules that balance location privacy and communication rate.} 
\vspace{-8pt}

\appendices


\section{Single-Domain Oracle Spoofing}\label{app:single_domain_oracle}
\blue{For the single-domain model in \eqref{eq:single_domain_oracle_model}, the conditional \ac{ls} path-gain estimate for a candidate $\bm{\varpi}$ is $\widehat{\bm{\alpha}}(\bm{\varpi})=\mathbf{A}^{\dagger}(\bm{\varpi})\mathbf{y}_{\mathrm{tm}}^{}$, and substitution into the residual gives $C(\bm{\varpi};\mathbf{y}_{\mathrm{tm}}^{})=\|\mathbf{P}^{\perp}_{\mathbf{A}(\bm{\varpi})}\mathbf{y}_{\mathrm{tm}}^{}\|^2$. 
Hence perfect oracle spoofing at $\overline{\bm{\omega}}$ holds whenever the noiseless received vector belongs to $\mathcal{R}(\mathbf{A}(\overline{\bm{\omega}}))$. The pilot in \eqref{eq:single_domain_oracle_pilot} satisfies $\operatorname{diag}(\widetilde{\mathbf{x}})\mathbf{A}(\bm{\omega})\bm{\alpha}=\mathbf{A}(\overline{\bm{\omega}})\bm{\lambda}$ provided $\mathbf{A}(\bm{\omega})\bm{\alpha}$ has no zero entries; thus the projected residual at the spoofed parameter vector is zero in the absence of receiver noise. The \ac{aoa}, \ac{aod}, and \ac{toa} cases follow by setting $\mathbf{A}=\mathbf{B}$, $\mathbf{A}=\mathbf{C}$, and $\mathbf{A}=\mathbf{D}$, respectively.}
\vspace{-8pt}

\section{Oracle AoA Spoofing with Imperfect CSI}
\label{app:imperfect_csi_aoa}
\blue{
Consider the \ac{simo} model in Section~\ref{sec:oracle_aoa}, where the oracle design uses $\mathbf{h}=\mathbf{B}(\bm{\theta})\bm{\alpha}$. In practice, the \ac{ue} may only have $\widehat{\mathbf{h}}=\mathbf{h}+\delta\mathbf{h}$, which covers both additive and first-order multiplicative gain errors mapped to the effective channel. Given the target $\overline{\mathbf{h}}=\mathbf{B}(\overline{\bm{\theta}})\bm{\lambda}$, the mismatched pilot $\widetilde{\mathbf{x}}=\operatorname{diag}^{-1}(\widehat{\mathbf{h}})\overline{\mathbf{h}}$ applied to the true channel yields the noiseless signal $\mathbf{y}=\operatorname{diag}(\widetilde{\mathbf{x}})\mathbf{h}=\operatorname{diag}(\mathbf{h}\oslash\widehat{\mathbf{h}})\overline{\mathbf{h}}=\boldsymbol{\Phi}\overline{\mathbf{h}}$, with $\boldsymbol{\Phi}=\operatorname{diag}(\mathbf{h}\oslash\widehat{\mathbf{h}})$ capturing the \ac{csi} mismatch. Including receiver noise, $\mathbf{y}=\overline{\mathbf{h}}+\delta\mathbf{y}+\mathbf{q}$ with $\delta\mathbf{y}=(\boldsymbol{\Phi}-\mathbf{I})\overline{\mathbf{h}}$. For small relative errors, $\mathbf{h}\oslash\widehat{\mathbf{h}}=\mathbf{1}\oslash(\mathbf{1}+\delta\mathbf{h}\oslash\mathbf{h})\approx\mathbf{1}-\delta\mathbf{h}\oslash\mathbf{h}$, hence $\delta\mathbf{y}\approx-\operatorname{diag}(\overline{\mathbf{h}})(\delta\mathbf{h}\oslash\mathbf{h})$. The \ac{aoa} estimator evaluates $C(\bm{\vartheta})=\|\mathbf{P}_{\mathbf{B}(\bm{\vartheta})}^{\perp}\mathbf{y}\|^2$; since $\mathbf{P}_{\mathbf{B}(\overline{\bm{\theta}})}^{\perp}\overline{\mathbf{h}}=\mathbf{0}$, at the spoofed angle $C(\overline{\bm{\theta}})=\|\mathbf{P}_{\mathbf{B}(\overline{\bm{\theta}})}^{\perp}(\delta\mathbf{y}+\mathbf{q})\|^2$, which proves \eqref{eq:imperfect_csi_projected_cost}.}
\vspace{-8pt}

\section{Blind AoA Spoofing}
\label{app:blindAOA}

\subsection{Single path case}
\black{Considering the case $L=1$, the minimal \ac{simo} model simplifies to $\mathbf{y} = \mathbf{b}(\theta_0^{}) \alpha_0^{} + \mathbf{q}$, where $\mathbf{b}(\theta_0^{})=\mathbf{W}^{\mathsf{H}} \mathbf{a}_{N_{\text{R}}^{}}(\theta_0^{})$. 
The channel gain estimate is $\widehat{\alpha}_0^{}(\theta_0^{}) =\mathbf{b}^{\mathsf{H}}(\theta_0^{})\mathbf{y} / \Vert \mathbf{b}(\theta_0^{})\Vert^2 $. 
Defining the projector $\mathbf{P}^{}_{\mathbf{b}(\vartheta)}={\mathbf{b}(\vartheta) \mathbf{b}^{\mathsf{H}}(\vartheta)}/{\left\Vert \mathbf{b}(\vartheta)\right\Vert^{2}}$, the cost function (see \eqref{eq:spoofingCriterion1}) is $C(\overline{\theta}_0^{}| \widetilde{\mathbf{x}}) = \Vert ( \mathbf{I}_M^{} - \mathbf{P}^{}_{\mathbf{b}(\overline{\theta}_0^{})} ) \text{diag}(\widetilde{\mathbf{x}}) \mathbf{b}(\theta_0^{}) \alpha_0^{} \Vert^2$, in which $\text{diag}(\widetilde{\mathbf{x}}) \mathbf{b}(\theta_0^{}) \alpha_0^{}$ is the noise-free observed signal under spoofed pilot $\widetilde{\mathbf{x}}$.
Perfect spoofing requires $C(\overline{\theta}_0^{}| \widetilde{\mathbf{x}}) = 0$. 
The sufficient condition for perfect spoofing from \eqref{eq:blindAOAreq} in the single-path case becomes
\(\text{diag}(\widetilde{\mathbf{x}})\mathbf{b}(\theta_0^{}) \in \text{span}\left( \mathbf{b}(\overline{\theta}_0^{}) \right)\), which does not require knowledge of $\alpha_0^{}$. Solving for $\widetilde{\mathbf{x}}$ yields $\widetilde{\mathbf{x}} = \lambda \, \mathbf{b}(\overline{\theta}_0^{}) \oslash \mathbf{b}({\theta}_0^{})$ when the corresponding component of $\mathbf{b}({\theta}_0^{})$ is not zero; otherwise the solution does not exist.}
\vspace{-8pt}

\subsection{Multiple path case}
Consider the noiseless spoofing model
$\mathbf y_{\text{tm}}^{} = \operatorname{diag}(\widetilde{\mathbf{x}})\,\mathbf B(\bm\theta)\,\bm\alpha$, 
with $\mathbf B(\bm\theta)\in\mathbb C^{M\times L}$ and $\bm\alpha\in\mathbb C^L$. 
Perfect blind spoofing requires the subspace inclusion \eqref{eq:blindAOAreq}.
The condition \eqref{eq:blindAOAreq} is equivalent to the existence of a non-zero vector $\widetilde{\mathbf{x}}$ and a $L\times L$ matrix $\bm{\Omega}$ such that 
$\operatorname{diag}(\widetilde{\mathbf{x}})\,\mathbf B(\bm\theta) = \mathbf B(\bar{\bm\theta}) \bm{\Omega}$.
\blue{This is a homogeneous system of $ML$ linear equations  with $M+L^2$ unknowns (i.e., the elements of $\widetilde{\mathbf{x}}$ and $\bm{\Omega}$). The system is guaranteed to have a non-zero solution in general only if it is undetermined, that is, $M < L^2 /(L-1)$. Therefore, for $L>1$, solutions exist only for small values of $M$, which are not of practical relevance. For $M \ge L^2 /(L-1)$, a non-trivial solution exists only in special cases of limited relevance, e.g., when $\bar{\bm\theta}$ is a permutation of $\bm{\theta}$.}
\vspace{-8pt}


\section{Proof of Blind ToA Spoofing}
\label{app:tdoa_blind}
Consider the two-path case, the \blue{minimal case} for localization, with delay steering matrix $\mathbf{D}(\overline{\bm{\tau}}) \in \mathbb{C}^{K \times 2}$. To spoof the \ac{tdoa} estimate, we seek $\widetilde{\mathbf{x}} \in \mathbb{C}^K$ such that the manipulated signal $\text{diag}(\widetilde{\mathbf{x}})\mathbf{D}(\bm{\tau})\bm{\alpha}$ lies in the range of $\mathbf{D}(\overline{\bm{\tau}})$, i.e., $\text{diag}(\widetilde{\mathbf{x}})\mathbf{D}(\bm{\tau})\bm{\alpha} = \beta_0^{}\,\mathbf{d}(\overline{\tau}_0^{}) + \beta_1^{}\,\mathbf{d}(\overline{\tau}_1^{})$ for some $\beta_0^{}, \beta_1^{} \in \mathbb{C}$. Expanding the left side, $\alpha_0^{} \mathbf{d}(\tau_0^{}) \odot \widetilde{\mathbf{x}} + \alpha_1^{} \mathbf{d}(\tau_1^{}) \odot \widetilde{\mathbf{x}} = \text{diag}(\alpha_0^{} \mathbf{d}(\tau_0^{}) + \alpha_1^{} \mathbf{d}(\tau_1^{})) \widetilde{\mathbf{x}}$. If only one path is non-zero, e.g., $\alpha_1^{} = 0$, a solution is $\widetilde{\mathbf{x}}_0^{} = \operatorname{diag}^{-1}(\mathbf{d}(\tau_0^{}))\,\mathbf{D}(\overline{\tau}_0^{}, \overline{\tau}_1^{}) \bm{\beta}_0^{} = \mathbf{D}(\overline{\tau}_0^{}-\tau_0^{},\,\overline{\tau}_1^{}-\tau_0^{})\,\bm{\beta}_0^{}$, and similarly for $\alpha_0^{} = 0$, $\widetilde{\mathbf{x}}_1^{} = \operatorname{diag}^{-1}(\mathbf{d}(\tau_1^{}))\,\mathbf{D}(\overline{\tau}_0^{}, \overline{\tau}_1^{}) \bm{\beta}_1^{} = \mathbf{D}(\overline{\tau}_0^{}-\tau_1^{},\,\overline{\tau}_1^{}-\tau_1^{})\,\bm{\beta}_1^{}$.

To spoof both paths simultaneously, we require $\widetilde{\mathbf{x}}_0^{} = \widetilde{\mathbf{x}}_1^{}$, which, under linear independence of delay vectors, leads to the trivial solution (i.e., $\bm{\beta}_0^{}=\bm{\beta}_1^{}=\mathbf{0}^{}_L$)  unless we impose structure on the target delays
$\overline{\tau}_1^{} = \overline{\tau}_0^{} + (\tau_1^{} - \tau_0^{})$
or equivalently,
$\overline{\tau}_1^{} = \overline{\tau}_0^{} - (\tau_1^{} - \tau_0^{})$. 
Then, the pilot simplifies to
$\widetilde{\mathbf{x}} = \lambda \mathbf{d}(\Delta^{}_{\tau_0^{}}) = \lambda \mathbf{d}(\Delta^{}_{\tau_1^{}})$. 
This shifts both paths by the same amount, preserving their delay difference. 
\vspace{-7pt}



\balance 
\bibliographystyle{IEEEtran}
\bibliography{sub/ref_abbrev}

\vfill

\end{document}